\newtheorem{lemma}{Lemma}
\newtheorem{theorem}{Theorem}
\newtheorem{proposition}{Proposition}
\newtheorem{remark}{Remark}
\newcommand{\figindent}{1cm}
\newcommand{\epsTwo}{0.073}
\newcommand{\delTwo}{0.83}
\newcommand{\epsFour}{0.051}
\newcommand{\delFour}{0.88}
\newcommand{\epsHuge}{0.019}
\newcommand{\delHuge}{0.93}
\newcommand{\blockstats}[2]{
\par\vspace{0.25em}
{\scriptsize\emph{Doeblin} $\varepsilon=#1$, \emph{Dobrushin} $\delta=#2$.}
}
\title{Kullback--Leibler Divergence Potential for Non-Ergodic Replication Dynamics: An Information-Theoretic Second Law}
\author[1,2,3]{Tatsuaki Tsuruyama\thanks{E-mail: \href{mailto:tsuruyam@kuhp.kyoto-u.ac.jp}{tsuruyam@kuhp.kyoto-u.ac.jp}}}
\affil[1]{Department of Physics, Tohoku University, Sendai 980-8578, Japan}
\affil[2]{Department of Drug Discovery Medicine, Kyoto University, Kyoto 606-8501, Japan}
\affil[3]{Department of Clinical Laboratory, Kyoto Tachibana University, Kyoto 607-8175, Japan}
\date{} 
\begin{document}
\maketitle
\begin{abstract}
While previous research in information thermodynamics has focused on the thermodynamic costs associated with information ``erasure'' or ``measurement'' through concepts such as Landauer’s principle and mutual information, little systematic discussion has addressed the inherently irreversible nature of ``replication'' itself and the accompanying degradation of information structure. In this study, we construct a mathematical model of information replication using a discrete Markov model and Gaussian convolution, and quantify changes in information at each replication step: Shannon entropy, cross-entropy, and the Kullback--Leibler divergence (KLD). The monotonic decrease of the \emph{KLD potential} $V$ (the minimal KLD to a reachable steady set) exhibits a Lyapunov-like property, which can be interpreted as a potential analogous to the free energy in the process by which a nonequilibrium system converges to a particular steady state. Furthermore, we extend this framework to the potential applicability to biological information processes such as DNA replication, showing that the free energy required for degradation and repair can be expressed in terms of KLD. This contributes to building a unified information-thermodynamic framework for operations such as replication, transmission, and repair of information.
\end{abstract}

\medskip

\section{General framework and main theorem}\label{sec:framework}

\subsection{Block invariance and reachable steady set}\label{subsec:block-invariance}

The information replication operations we study (image blurring, base substitution with proofreading, etc.) are inherently \emph{local}: one step only rearranges probability mass within a constrained neighborhood (spatial or compositional), but does not freely mix all microscopic states.  As a result, certain coarse observables, e.g., the total mass inside a spatial region of an image, or the adenine(A) -thymine (T) rich block and guanine(G) -cytosine (C) rich block in DNA to be replicated as a template, are effectively \emph{conserved} by every replication step.  Empirical mutation processes are asymmetric between the AT and GC blocks. First, in various bacteria, the spectrum of spontaneous mutation is generally biased toward AT (C/G$\to$T/A transitions dominate), implying a baseline GC$\to$AT pressure \cite{HershbergPetrov2010}.  Second, cytosines in CpG dinucleotides (often methylated as 5mC) are hypermutable via deamination, which elevates C$\to$T transitions and locally accelerates GC$\to$AT decay \cite{FryxellMoon2005,Cooper2010}.  Third, germline substitution rates vary strongly with the local sequence context; in humans, the 7-mer sequence context explains most of the variability in polymorphism levels \cite{AggarwalaVoight2016}.  In contrast to AT-directed mutational bias, recombination-associated GC-biased gene conversion (gBGC) favors the fixation of GC alleles and can increase GC content in high recombination regions without invoking fitness differences \cite{DuretGaltier2009,CapraPollard2013,Weber2014}.

These well-documented asymmetries motivate modeling replication on a block partition that separates AT- from GC-rich components. 
In our framework, block invariance captures slow exchange between AT and GC pools while allowing within-block relaxation; consequently, the KLD potential $V$ measures the departure from the stable block set under realistic composition-dependent mutation / repair biases. When such conserved coarse variables exist, the Markov dynamics does not converge to a \emph{single} global invariant distribution; instead, the state space decomposes into components that evolve independently.  We call this situation \emph{non-ergodic} in the sense that long-term behavior retains memory of the initial coarse structure through the conserved weights.

(i) In the image model, a blockwise Gaussian convolution forbids smoothing across block boundaries; the total intensity in each block is preserved at every step.  
(ii) In the DNA model, a block diagonal substitution kernel reflects biochemical or sequence constraints (e.g., AT- vs. GC-rich segments); block masses (AT vs. GC) remain fixed while within-block compositions relax.  
(iii) More generally, locality, topological separation, or routing restrictions in copying channels induce invariant 'blocks' that prevent global mixing.
Because replication preserves these coarse variables, the appropriate equilibrium notion is not a single point but a \emph{set} of steady states obtained by fully relaxing \emph{within} each invariant component while keeping the initial coarse weights fixed.  This motivates modeling the state space as a disjoint union of blocks and defining a \emph{reachable steady set} against which we will measure the distance using a KL-based potential.
We emphasise that non-trivial block partitions
$\mathcal{X} = \bigsqcup_{j=1}^m \mathcal{X}_j$ with $m>1$
are meaningful precisely when the dynamics itself possesses non-mixing components,
so that probability mass cannot freely move between all microscopic states.
In the fully mixing (ergodic) case, the only natural choice is the trivial
partition with a single block, and our potential $V$ reduces to the standard
KLD to a unique invariant steady state.

\subsection*{Framework and notation (matrix semantics made explicit)}

We consider a finite state space partitioned into $m\in\mathbb N$ disjoint blocks
\begin{equation}
  \mathcal{X}\ =\ \bigsqcup_{j=1}^{m}\,\mathcal{X}_j,
  \label{eq:state-partition}
\end{equation}
with $|\mathcal X|=d<\infty$. The probability simplex over $\mathcal X$ is
\[
\Delta(\mathcal X)\ :=\ \bigl\{\,p\in\mathbb R^{d}\ :\ p\ge 0,\ \mathbf 1^\top p=1\,\bigr\},
\]
and we write $p=(p_x)_{x\in\mathcal X}\in\Delta(\mathcal X)$.

A one-step Markov kernel $T$ acts on the right by
\[
(pT)(y)=\sum_{x\in\mathcal X} p(x)\,T(x,y),\qquad p_{n+1}=p_n\,T,
\]
so $T$ is a $d\times d$ \emph{row-stochastic} matrix ($T(x,y)\ge0$ and
$\sum_{y}T(x,y)=1$ for each row $x$). The \emph{matrix dimension} $d\times d$ equals the number of states $|\mathcal X|$.
Rows index the \emph{current} state, columns the \emph{next} state. Each block kernel $T_j$ has size $|\mathcal X_j|\times|\mathcal X_j|$ and updates the conditional $p^{(j)}$ while preserving its mass $w_j(p_0)$. This row-vector/right-action convention
is used throughout: $p_{n+1}=p_n\,T$.

\emph{Semantics:} the \textbf{row index} is the current state, the \textbf{column index}
is the next state, and the \textbf{matrix size} $d\times d$ equals the number of states.
Restricting to a block $\mathcal X_j$ yields a subkernel $T_j$ of size
$|\mathcal X_j|\times|\mathcal X_j|$.

\paragraph{Block-invariant dynamics.}
The dynamics is \emph{block-invariant} if the transitions never cross the block boundaries:
\begin{equation}
  x\in\mathcal{X}_j\ \Longrightarrow\ T(x,\mathcal{X}_j)=1\qquad (j=1,\dots,m),
  \label{eq:block-invariance}
\end{equation}
equivalently $T(x,y)=0$ for $y\notin\mathcal X_j$ whenever $x\in\mathcal X_j$.
Matrix-wise, $T$ is block diagonal: $T=\mathrm{diag}(T_1,\dots,T_m)$.
Under \eqref{eq:block-invariance}, \emph{block masses} are conserved and conditionals evolve
block-wise:
\begin{equation}
  w_j(p_{n+1})=w_j(p_n),\qquad (Tp)^{(j)}=T_j\,p^{(j)}.
  \label{eq:mass-cons-cond-evol}
\end{equation}

\paragraph{Coarse variables (block masses) and within-block conditionals.}
For $p\in\mathcal P(\mathcal X)$ define
\begin{equation}
  w_j(p):=p(\mathcal X_j)=\sum_{x\in\mathcal X_j}p(x),\qquad
  p^{(j)}(x):=\begin{cases} p(x)/w_j(p), & x\in\mathcal X_j,\\ 0, & x\notin\mathcal X_j.\end{cases}
  \label{eq:block-mass-cond}
\end{equation}
Then $p$ decomposes as a convex mixture of conditionals within the block:
\begin{equation}
  p\ =\ \sum_{j=1}^m w_j(p)\,p^{(j)}.
  \label{eq:mixture-decomp}
\end{equation}

For each block, invariant measures and nonemptiness of the steady set is given by:
\begin{equation}
  \mathcal{I}_j\ :=\ \bigl\{\ \pi\in\mathcal{P}(\mathcal{X}_j)\ :\ \pi T_j=\pi\ \bigr\}.
  \label{eq:Ij}
\end{equation}

\emph{Existence.} Since every $T_j$ is a finite row-stochastic matrix, it admits at least one
invariant distribution (e.g.\ by Brouwer’s fixed point theorem on the simplex or
Perron–Frobenius for $T_j^\top$), hence $\mathcal I_j\neq\emptyset$ for all $j$. Given an initial distribution $p_0$, 
the \emph{reachable steady set} is
\label{eq:V-def-intro}

  \label{eq:V-def-intro2}
  \label{eq:V-def-intro-b}
  \[ \Pi(p_0) := \Bigl\{ \pi = \sum_{j=1}^{m} w_j(p_0)\,\pi_j : \pi_j \in \mathcal{I}_j \Bigr\}, \]
  \label{eq:Pi-def}

which is nonempty. Intuitively, $\Pi(p_0)$ collects all steady states that (i) fully relax
\emph{within} each block and (ii) preserve the initial coarse masses $w_j(p_0)$ fixed by
non-ergodic constraints.

\paragraph*{Block primitivity.}
\emph{Primitivity} of a block kernel $T_j$ (irreducible and aperiodic; equivalently,
$(T_j)^n>0$ for some $n$) is invoked \emph{only} to conclude that the within-block
conditionals converge to a \emph{unique} invariant $\pi_j^\ast\in\mathcal I_j$. 
For a block kernel $K$, with constant $\varepsilon$, means
\begin{equation}
  \exists\,\nu\in\Delta(\mathcal X_j),\ \varepsilon>0\ \text{ s.t. }\ 
  K(x,\cdot)\ \ge\ \varepsilon\,\nu(\cdot)\quad \forall x\in\mathcal X_j,
  \label{eq:Doeblin-cond}
\end{equation}
which implies total-variation contraction at most $1-\varepsilon$ per step.
We also define coefficient
\begin{equation}
  \delta(K)\ :=\ 1-\min_{x,x'}\sum_{y\in\mathcal X_j}\min\{K(x,y),K(x',y)\},
  \label{eq:Dobrushin-def}
\end{equation}
which bounds the one-step total-variation contraction by $\delta(K)$.

\subsection{Key Lemmas}\label{subsec:lemmas}
\begin{lemma}[Preservation of block mass and conditional evolution]\label{lem:block-preservation}
Let $p_{+}=Tp$ with block-invariant $T$ as in \eqref{eq:block-invariance}. 
Then for all $j=1,\dots,m$,
\begin{equation}
  w_j(p_{+})\ =\ w_j(p),
  \label{eq:block-mass-preserve}
\end{equation}
and, whenever $w_j(p)>0$,
\begin{equation}
  (Tp)^{(j)}\ =\ T_j\,p^{(j)}.
  \label{eq:conditional-evolution}
\end{equation}
\end{lemma}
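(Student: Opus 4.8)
The plan is to prove the two claims separately, both following directly from the block-invariance hypothesis in Equation~\eqref{eq:block-invariance}.

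For the preservation of block masses in Equation~\eqref{eq:block-mass-preserve}, I would compute $w_j(p_+)=p_+(\mathcal{X}_j)=\sum_{y\in\mathcal{X}_j}(Tp)(y)$ by expanding the kernel action $(Tp)(y)=\sum_{x\in\mathcal{X}}p(x)T(x,y)$ and interchanging the finite sums to obtain $\sum_{x\in\mathcal{X}}p(x)\,T(x,\mathcal{X}_j)$. The point is then to split the outer sum over $x$ according to which block $x$ lives in. For $x\in\mathcal{X}_j$, block invariance gives $T(x,\mathcal{X}_j)=1$; for $x\in\mathcal{X}_k$ with $k\neq j$, the same invariance forces $T(x,\mathcal{X}_k)=1$, hence $T(x,\mathcal{X}_j)=0$ since the blocks are disjoint. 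Only the $x\in\mathcal{X}_j$ terms survive, leaving $\sum_{x\in\mathcal{X}_j}p(x)=w_j(p)$, which is exactly what we want.

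For the conditional evolution in Equation~\eqref{eq:conditional-evolution}, assume $w_j(p)>0$. I would evaluate $(Tp)^{(j)}(y)$ for $y\in\mathcal{X}_j$ using its definition in Equation~\eqref{eq:cond-dist}, namely $(Tp)(y)/w_j(p_+)$, and replace the denominator by $w_j(p)$ using the first part. The numerator is $(Tp)(y)=\sum_{x\in\mathcal{X}}p(x)T(x,y)$; because $y\in\mathcal{X}_j$, block invariance makes $T(x,y)=0$ whenever $x\notin\mathcal{X}_j$, so the sum collapses to $\sum_{x\in\mathcal{X}_j}p(x)T(x,y)$. Dividing by $w_j(p)$ and recognizing $p(x)/w_j(p)=p^{(j)}(x)$ yields $\sum_{x\in\mathcal{X}_j}p^{(j)}(x)\,T_j(x,y)=(p^{(j)}T_j)(y)=(T_j\,p^{(j)})(y)$, where $T_j$ is the restriction of $T$ to $\mathcal{X}_j$ acting on $\mathcal{P}(\mathcal{X}_j)$. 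Since $p^{(j)}$ is supported on $\mathcal{X}_j$, both sides vanish for $y\notin\mathcal{X}_j$, so the identity holds on all of $\mathcal{X}$.

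I do not anticipate a substantive obstacle here; the argument is a direct bookkeeping of where probability mass can flow under a block-diagonal kernel. The only point requiring a little care is keeping the normalizations consistent, i.e.\ confirming that the block mass used in the denominator of $(Tp)^{(j)}$ is the \emph{preserved} mass $w_j(p)$ rather than an a priori distinct $w_j(p_+)$; this is precisely why the mass-preservation identity must be established first and then substituted. A minor degenerate case, $w_j(p)=0$, is excluded by hypothesis in Equation~\eqref{eq:conditional-evolution}, so the conditional $p^{(j)}$ is well defined throughout.
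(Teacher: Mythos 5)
Your proof is correct and follows essentially the same route as the paper's: block invariance gives $w_j(Tp)=\sum_{x\in\mathcal{X}}p(x)\,T(x,\mathcal{X}_j)=w_j(p)$, and the conditional identity follows by substituting this preserved mass into the definition of $(Tp)^{(j)}$ and collapsing the sum to $x\in\mathcal{X}_j$. The only cosmetic difference is that you verify \eqref{eq:conditional-evolution} pointwise at each $y\in\mathcal{X}_j$ while the paper does it for subsets $A\subseteq\mathcal{X}_j$, which is equivalent in this finite setting; your extra justification that $T(x,\mathcal{X}_j)=0$ for $x\notin\mathcal{X}_j$ (via disjointness and $T(x,\mathcal{X}_k)=1$) is a nice explicit touch the paper leaves implicit.
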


\begin{proof}
By block invariance, $T(x,\mathcal{X}_j)=1$ for $x\in\mathcal{X}_j$ and $0$ otherwise. Hence
\( w_j(p_{+})=(Tp)(\mathcal{X}_j)=\sum_{x\in\mathcal{X}}p(x)\,T(x,\mathcal{X}_j)
= \sum_{x\in\mathcal{X}_j}p(x)\cdot1+\sum_{x\notin\mathcal{X}_j}p(x)\cdot0
= \sum_{x\in\mathcal{X}_j}p(x)=w_j(p) \).

which proves \eqref{eq:block-mass-preserve}. For any measurable $A\subseteq\mathcal{X}_j$,

\begin{align}
(Tp)^{(j)}(A)
&= \frac{(Tp)(A)}{w_j(p)}                                   \notag\\
&= \frac{\sum_{x\in\mathcal{X}_j} p(x)\,T_j(x,A)}{w_j(p)}    \notag\\
&= \sum_{x\in\mathcal{X}_j} \frac{p(x)}{w_j(p)}\,T_j(x,A)    \notag\\
&= (T_j\,p^{(j)})(A).
\label{eq:conditional-evolution-proof}
\end{align}

establishing \eqref{eq:conditional-evolution}.
\end{proof}

\begin{lemma}[Block decomposition of KLD]\label{lem:block-decomposition}
Let $\pi=\sum_{j=1}^{m} w_j\,\pi_j$ with $w_j>0$ and $\pi_j\in\mathcal{P}(\mathcal{X}_j)$ supported in disjoint blocks $\mathcal{X}_j$. 
Then, for any $p\in\mathcal{P}(\mathcal{X})$,
\begin{equation}
  D_{\mathrm{KL}}(p\|\pi)\ 
  =\ \sum_{j=1}^{m} w_j(p)\,D_{\mathrm{KL}}\!\bigl(p^{(j)}\|\pi_j\bigr)
    \ +\ \sum_{j=1}^{m} w_j(p)\,\log\!\frac{w_j(p)}{w_j}.
  \label{eq:KL-block-decomp}
\end{equation}
\end{lemma}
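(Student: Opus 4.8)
The plan is to evaluate $D_{\mathrm{KL}}(p\|\pi)$ directly from the definition \eqref{eq:KL-def} and reorganize the single sum over $\mathcal{X}$ into a double sum indexed first by the block $j$ and then by the states $x\in\mathcal{X}_j$, which is legitimate because of the disjoint partition \eqref{eq:state-partition}. The two structural inputs that make everything collapse are: (a) since the blocks are disjoint and each $\pi_j$ is supported on $\mathcal{X}_j$, the mixture $\pi=\sum_j w_j\pi_j$ evaluated at a point $x\in\mathcal{X}_j$ reduces to the single surviving term $\pi(x)=w_j\,\pi_j(x)$; and (b) by the definition of the conditional distribution \eqref{eq:cond-dist}, every $x\in\mathcal{X}_j$ with $w_j(p)>0$ obeys $p(x)=w_j(p)\,p^{(j)}(x)$. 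I would state both factorizations explicitly before substituting.

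Next I would insert these factorizations into the argument of the logarithm and split it additively,
\[
  \log\frac{w_j(p)\,p^{(j)}(x)}{w_j\,\pi_j(x)}
  \;=\; \log\frac{p^{(j)}(x)}{\pi_j(x)} \;+\; \log\frac{w_j(p)}{w_j}.
\]
The second term is constant over $x\in\mathcal{X}_j$, so after factoring out the prefactor $w_j(p)$ the inner sum over the block separates into two pieces. The piece carrying $\log(p^{(j)}/\pi_j)$ reassembles, by definition, into $w_j(p)\,D_{\mathrm{KL}}(p^{(j)}\|\pi_j)$; the piece carrying the constant becomes $w_j(p)\log(w_j(p)/w_j)\sum_{x\in\mathcal{X}_j}p^{(j)}(x)$, where the residual sum equals $1$ because $p^{(j)}$ is a probability distribution on $\mathcal{X}_j$. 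Summing over $j$ then produces \eqref{eq:KL-block-decomp} verbatim.

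Since the computation is essentially a grouping (chain-rule) identity for KL divergence, I do not expect a genuine obstacle; the only point requiring care — and the closest thing to a subtlety — will be the bookkeeping for blocks with $w_j(p)=0$, for which $p^{(j)}$ is undefined. The hard part will be arguing cleanly that such blocks contribute zero consistently on both sides: every $x\in\mathcal{X}_j$ there has $p(x)=0$, so the block contributes nothing to the left-hand sum under the convention $0\log(0/q)=0$, while on the right-hand side both summands carry the prefactor $w_j(p)=0$ and are set to zero by the same convention. Because the state space is finite and every mixture weight $w_j>0$ keeps the denominators positive, no integrability or $\log 0$ issue arises, and the identity holds with both sides agreeing term-by-term on the vanishing blocks.
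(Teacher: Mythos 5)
Your proposal is correct and follows essentially the same route as the paper's proof: split the sum over the disjoint blocks, use $\pi(x)=w_j\,\pi_j(x)$ and $p(x)=w_j(p)\,p^{(j)}(x)$ for $x\in\mathcal{X}_j$, separate the logarithm additively, and reassemble the two pieces using the normalization $\sum_{x\in\mathcal{X}_j}p^{(j)}(x)=1$. Your explicit treatment of blocks with $w_j(p)=0$ (where $p^{(j)}$ is undefined and both sides vanish under the convention $0\log(0/q)=0$) is a point of care the paper's proof passes over silently, but it does not change the argument.
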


\begin{proof}
Using the disjointness $\mathcal{X}=\bigsqcup_j \mathcal{X}_j$ and writing $p(x)=w_j(p)\,p^{(j)}(x)$ for $x\in\mathcal{X}_j$,
\begin{align}
D_{\mathrm{KL}}(p\|\pi)
&= \sum_{j=1}^{m} \sum_{x\in\mathcal{X}_j} p(x)\,\log\!\frac{p(x)}{w_j\,\pi_j(x)} \notag\\
&= \sum_{j=1}^{m} \sum_{x\in\mathcal{X}_j} w_j(p)\,p^{(j)}(x)
   \left[ \log\!\frac{p^{(j)}(x)}{\pi_j(x)} + \log\!\frac{w_j(p)}{w_j} \right] \notag\\
&= \sum_{j=1}^{m} w_j(p)\,D_{\mathrm{KL}}\!\bigl(p^{(j)}\|\pi_j\bigr)
   + \sum_{j=1}^{m} w_j(p)\,\log\!\frac{w_j(p)}{w_j}.
\end{align}

which is \eqref{eq:KL-block-decomp}.
\end{proof}

\medskip
Equations \eqref{eq:block-mass-preserve}–\eqref{eq:conditional-evolution} show that the dynamics preserves coarse variables $\{w_j\}$ and evolves conditionals within each block, while \eqref{eq:KL-block-decomp} cleanly separates \emph{within-block} divergences from a \emph{coarse-mass mismatch} term. 
These identities will be the backbone for the Lyapunov property established in Sec.~\ref{subsec:lyapunov}.

\subsection{Lyapunov property of the KLD potential}\label{subsec:lyapunov}

\medskip
\noindent\textit{Statement.}
Let $p_{n+1}=Tp_n$ with a block-invariant kernel $T=\mathrm{diag}(T_1,\dots,T_m)$ (Sec.~\ref{subsec:block-invariance}).
\begin{theorem}[KLD potential is Lyapunov under block invariance]\label{thm:V-Lyapunov}
For all $n\ge0$,
\begin{equation}
V(p_{n+1})\ \le\ V(p_n).
\label{eq:V-monotone}
\end{equation}
If, in addition, each block kernel $T_j$ is primitive (irreducible and aperiodic) with unique invariant $\pi_j^\ast\in\mathcal{I}_j$, then
\begin{equation}
V(p_n)\ =\ \sum_{j=1}^{m} w_j(p_0)\,D_{\mathrm{KL}}\!\bigl(p_n^{(j)}\!\|\pi_j^\ast\bigr)
\ \xrightarrow[n\to\infty]{}\ 0.
\label{eq:V-vanish}
\end{equation}
\end{theorem}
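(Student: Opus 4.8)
The plan is to reduce the global potential to a weighted sum of independent within-block potentials, exploiting that block masses are conserved so the coarse-mismatch term of Lemma~\ref{lem:block-decomposition} vanishes along the whole trajectory. Iterating Lemma~\ref{lem:block-preservation} gives $w_j(p_n)=w_j(p_0)$ for every $n$ and $j$, so each $\pi\in\Pi(p_0)$ carries exactly the same block weights as $p_n$. Substituting into \eqref{eq:KL-block-decomp}, the mismatch sum $\sum_j w_j(p_n)\log\frac{w_j(p_n)}{w_j(p_0)}$ is identically zero, leaving $D_{\mathrm{KL}}(p_n\|\pi)=\sum_j w_j(p_0)\,D_{\mathrm{KL}}(p_n^{(j)}\|\pi_j)$. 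Since $\Pi(p_0)$ is a product set over the independent choices $\pi_j\in\mathcal{I}_j$ and the objective is separable with nonnegative weights, the infimum distributes across blocks (restricting, as in Lemma~\ref{lem:block-decomposition}, to blocks with $w_j(p_0)>0$, the others contributing nothing):
\[
V(p_n)=\sum_{j=1}^m w_j(p_0)\,V_j\bigl(p_n^{(j)}\bigr),\qquad V_j(\mu):=\inf_{\pi_j\in\mathcal{I}_j}D_{\mathrm{KL}}(\mu\|\pi_j).
\]

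Next I would establish a per-block contraction. By Lemma~\ref{lem:block-preservation} the conditional evolves as $p_{n+1}^{(j)}=T_j p_n^{(j)}$, and any invariant $\pi_j\in\mathcal{I}_j$ satisfies $T_j\pi_j=\pi_j$. The data-processing (contraction) inequality for relative entropy under the stochastic map $T_j$ then gives
\[
D_{\mathrm{KL}}\bigl(T_j p_n^{(j)}\,\|\,\pi_j\bigr)=D_{\mathrm{KL}}\bigl(T_j p_n^{(j)}\,\|\,T_j\pi_j\bigr)\le D_{\mathrm{KL}}\bigl(p_n^{(j)}\,\|\,\pi_j\bigr).
\]
As this holds for each fixed $\pi_j\in\mathcal{I}_j$, taking the infimum over $\mathcal{I}_j$ on both sides yields $V_j(p_{n+1}^{(j)})\le V_j(p_n^{(j)})$; multiplying by $w_j(p_0)\ge0$ and summing recovers \eqref{eq:V-monotone}. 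Here I would note that a finite stochastic matrix always admits an invariant distribution, so $\mathcal{I}_j\neq\varnothing$ and each $V_j$ is well defined, and the inequality is vacuous whenever $V(p_n)=+\infty$.

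For the convergence statement I would specialize to primitive $T_j$. Perron--Frobenius supplies a unique invariant $\pi_j^\ast$ with strictly positive entries, so $\mathcal{I}_j=\{\pi_j^\ast\}$ and the infimum defining $V_j$ collapses: $V_j(p_n^{(j)})=D_{\mathrm{KL}}(p_n^{(j)}\|\pi_j^\ast)$, which is exactly the closed form asserted in \eqref{eq:V-vanish}. The same theorem gives $p_n^{(j)}=T_j^{\,n}p_0^{(j)}\to\pi_j^\ast$ componentwise. Because $\pi_j^\ast$ has full support on $\mathcal{X}_j$, the map $\mu\mapsto D_{\mathrm{KL}}(\mu\|\pi_j^\ast)$ is continuous on the simplex (using $t\log t\to0$ as $t\to0^+$), so $D_{\mathrm{KL}}(p_n^{(j)}\|\pi_j^\ast)\to D_{\mathrm{KL}}(\pi_j^\ast\|\pi_j^\ast)=0$, and summing against $w_j(p_0)$ yields $V(p_n)\to0$.

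The step I expect to be the crux is the blockwise interchange of infimum and sum together with the vanishing of the mass-mismatch term: this is precisely where the non-ergodic constraint of fixed coarse weights makes an otherwise global optimization separable, and it is what distinguishes the argument from the classical single-invariant Lyapunov proof. The contraction inequality is the standard monotonicity of relative entropy under a stochastic map, so the only genuine care there is to apply it against an \emph{invariant} reference and to dispatch possibly infinite values before specializing to the primitive case, where full support of $\pi_j^\ast$ makes everything finite and continuous.
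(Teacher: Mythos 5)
Your proof is correct and follows essentially the same route as the paper's: conservation of block masses (Lemma~\ref{lem:block-preservation}) plus the block decomposition of KLD (Lemma~\ref{lem:block-decomposition}), then the data-processing inequality applied blockwise against invariant references, and Perron--Frobenius convergence of $p_n^{(j)}\to\pi_j^\ast$ in the primitive case. The only differences are organizational---you distribute the infimum across blocks first and contract each $V_j$, whereas the paper contracts $D_{\mathrm{KL}}(\cdot\|\pi)$ for each fixed $\pi\in\Pi(p_0)$ and takes the infimum at the end---plus your convergence step is slightly more careful, making explicit the full support of $\pi_j^\ast$ and the resulting continuity of $\mu\mapsto D_{\mathrm{KL}}(\mu\|\pi_j^\ast)$, which the paper leaves implicit.
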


\begin{proof}
Fix $\pi\in\Pi(p_0)$ and write $\pi=\sum_{j=1}^{m} w_j(p_0)\,\pi_j$ with $\pi_j\in\mathcal{I}_j$. 
Block masses are conserved along the trajectory (Lemma~\ref{lem:block-preservation}), hence $w_j(Tp)=w_j(p)=w_j(p_0)$. 
By the KL block decomposition (Lemma~\ref{lem:block-decomposition}),
\begin{equation}
\begin{aligned}
D_{\mathrm{KL}}(Tp\,\|\,\pi)
&= \sum_{j=1}^{m} w_j(p_0)\,D_{\mathrm{KL}}\!\bigl((Tp)^{(j)}\,\|\,\pi_j\bigr)\\
&= \sum_{j=1}^{m} w_j(p_0)\,D_{\mathrm{KL}}\!\bigl(T_j p^{(j)}\,\|\,\pi_j\bigr).
\end{aligned}
\label{eq:KL-after-step}
\end{equation}
Applying the data-processing inequality (DPI) blockwise,
\begin{equation}
D_{\mathrm{KL}}(T_j r\,\|\,\pi_j)\ \le\ D_{\mathrm{KL}}(r\,\|\,\pi_j)\qquad(\pi_jT_j=\pi_j),
\label{eq:KL-contract}
\end{equation}
and inserting \eqref{eq:KL-contract} into \eqref{eq:KL-after-step} yields
\begin{equation}
D_{\mathrm{KL}}(Tp\,\|\,\pi)\ \le\ \sum_{j=1}^{m} w_j(p_0)\,D_{\mathrm{KL}}\!\bigl(p^{(j)}\,\|\,\pi_j\bigr)
= D_{\mathrm{KL}}(p\,\|\,\pi).
\label{eq:one-step-KL}
\end{equation}
Taking the infimum over $\pi\in\Pi(p_0)$ proves \eqref{eq:V-monotone}. 
If each $T_j$ is primitive with invariant $\pi_j^\ast$, then $p_n^{(j)}\to\pi_j^\ast$ and \eqref{eq:V-vanish} follows.
\end{proof}

\section*{Robustness to weak inter-block leakage and re-partitioning}
\label{sec:robustness}

\paragraph*{Motivation.}
Secs.~\ref{subsec:lemmas}--\ref{subsec:lyapunov} establish a Lyapunov principle for $V$ under
\emph{exact} block invariance. To examine robustness to \emph{weak} violations (small inter-block
leakage or re-partitioning), we introduce a leakage-tolerant potential $V_\delta$ that continuously
extends $V$ and preserves its qualitative behavior up to $O(\delta)$ slack.

\paragraph{Leakage-tolerant admissible set and closed form (corrected).}
When leakage is small over the observation window, relax coarse masses by $\pm\delta$ and define
\[
\Pi_{\mathcal{P},\delta}(p_0)
:=\Bigl\{\,
\pi\in\Delta(\mathcal{X})\ \Big|\ 
\forall j,\ \Bigl|\sum_{i\in\mathcal{X}_j}\pi_i - w_j(p_0)\Bigr|\le \delta
\,\Bigr\},\qquad
V_\delta(p):=\inf_{\pi\in\Pi_{\mathcal{P},\delta}(p_0)} D_{\mathrm{KL}}(p\Vert \pi).
\]
By KL block decomposition and KKT optimality, the optimizer has the form
\[
w_j^\star \;=\; \biggl[\,\frac{w_j(p)}{\tau^\star}\,\biggr]_{[\,a_j,\,b_j\,]},
\quad
a_j:=w_j(p_0)-\delta,\ \ b_j:=w_j(p_0)+\delta,
\]
where $\tau^\star>0$ is chosen so that $\sum_j w_j^\star=1$. Hence
\begin{equation}
\boxed{\;
V_\delta(p)
=\sum_{j=1}^{m} w_j(p)\,
\log\!\frac{w_j(p)}{\,w_j^\star\,},
\quad
w_j^\star=\biggl[\,\frac{w_j(p)}{\tau^\star}\,\biggr]_{[\,a_j,\,b_j\,]},
\ \ \sum_j w_j^\star=1.
\;}
\label{eq:Vdelta-correct}
\end{equation}
Since $f(\tau):=\sum_j \bigl[w_j(p)/\tau\bigr]_{[a_j,b_j]}$ is continuous and strictly decreasing in $\tau>0$, $\tau^\star$ is found by a short bisection on $(0,\infty)$.

A self-contained derivation and a sufficient threshold for strict decrease under leakage are provided in Appendix A.

\paragraph{Continuity and small-leakage persistence.}
As $\delta\to 0$, $V_\delta(p)\to V(p)$. Hence, under sufficiently small leakage, the qualitative
properties proved for $V$ (monotonicity along the dynamics, long-time limit, and the thermodynamic
reading $k_{\mathrm B}T\,V_\delta$ per refresh) persist up to an $O(\delta)$ relaxation.

\paragraph{Coarsening by merging leaking blocks.}
If mutually leaking blocks are merged to form a coarser partition $\mathcal{Q}\succeq\mathcal{P}$,
data processing implies $V_{\mathcal{Q}}(p)\le V_{\mathcal{P}}(p)$. Coarsening restores exact
block invariance at the partition level, at the price of a more conservative (smaller) KL-based
potential and a weaker maintenance-cost bound.

\section{Instantiations}\label{sec:instantiations}

\subsection{Image replication via Gaussian convolution}\label{subsec:image}

\paragraph{Block-patterned model.}
First, we model an image with an explicit \emph{block (mosaic) structure}: the pixel grid is partitioned
into $m$ disjoint rectangular regions that remain isolated in the non-ergodic variant (no
cross-region mass transfer). Figure~\ref{fig:image} uses three blockings of the same $256\times256$
image: $2\times2$ (top), $4\times4$ (middle), and $128\times128$ (bottom). The ergodic variant
corresponds to the same update without any blocking (global mixing).

\medskip
We model a single replication step as a Markov smoothing map $T_\sigma$ \emph{implemented} by a
discrete Gaussian convolution on $I:\mathbb{Z}^2\to[0,1]$ followed by re-quantization. The kernel \emph{coincides with} $T_\sigma$ here. The kernel (discrete, renormalized) is given by:
\begin{equation}
  T_\sigma(u,v)\;:=\;
  \frac{\exp\!\bigl(-(u^2+v^2)/(2\sigma^2)\bigr)}
       {\displaystyle\sum_{a,b\in\mathbb{Z}}\exp\!\bigl(-(a^2+b^2)/(2\sigma^2)\bigr)},
  \qquad (u,v)\in\mathbb{Z}^2,\ \sigma>0,
  \label{eq:gauss-kernel}
\end{equation}
so that $\sum_{u,v}T_\sigma(u,v)=1$. The (global) discrete convolution is
\begin{equation}
  (I\ast T_\sigma)(x,y)\ :=\ \sum_{u,v\in\mathbb{Z}} I(x-u,y-v)\,T_\sigma(u,v),
  \label{eq:image-conv}
\end{equation}
and in the non-ergodic (blockwise) variant the sum is restricted to pixels belonging to the same
predefined region, which simply makes the $T_\sigma$ block diagonal.

Let $p_n$ denote the (normalized) mosaic block distribution, histogram,  before step $n$. The convolution \eqref{eq:image-conv} followed by the same binning induces the one-step Markov channel in the mosaic block distribution, denoted $T_\sigma$ so that
\begin{equation}
  q_n\ =\ p_n\,T_\sigma,\qquad p_{n+1}\ =\ q_n,
  \label{eq:image-update}
\end{equation}
with row vectors acting on the right.

\paragraph{Recorded metrics.}
We view the histogram on a finite state space $\mathcal X$ (intensity bins). 
\begin{align}
  H(q_n)        &= -\sum_{x\in\mathcal X} q_n(x)\,\log q_n(x), \label{eq:H-image}\\
  H_\times(n)   &= -\sum_{x\in\mathcal X} p_n(x)\,\log q_n(x), \label{eq:Hcross-image}\\
  D_{\mathrm{KL}}(p_n\|q_n)
                &= \sum_{x\in\mathcal X} p_n(x)\,\log\frac{p_n(x)}{q_n(x)}, \label{eq:KL-image}\\
  V(p_n)        &= \inf_{\pi\in\Pi(p_0)} D_{\mathrm{KL}}(p_n\|\pi). \label{eq:V-image}
\end{align}

For Fig.~\ref{fig:image} (Appendix B), values are shown in \emph{bits} (base 2).
Each panel annotates a certified $\varepsilon$ and the corresponding $\delta(T_\sigma)$ for its
(blockwise) kernel.

\paragraph*{Changing minimizers.}
When a block admits multiple invariant measures ($\mathcal I_j$ not a singleton, e.g.\ reducible blocks),
the minimizer $\pi^\star\!\in\!\Pi(p_0)$ of $V(p_n)$ can \emph{change along the trajectory} even though
$V$ remains nonincreasing. A concrete three-state reducible example with an explicit switching of the KLD minimizer is given in Appendix G.

\FloatBarrier
\begin{figure*}[p]
  \centering
  \hspace*{\figindent}%
  \begin{minipage}{\dimexpr\textwidth-\figindent\relax}
    \centering
    \includegraphics[width=0.70\linewidth]{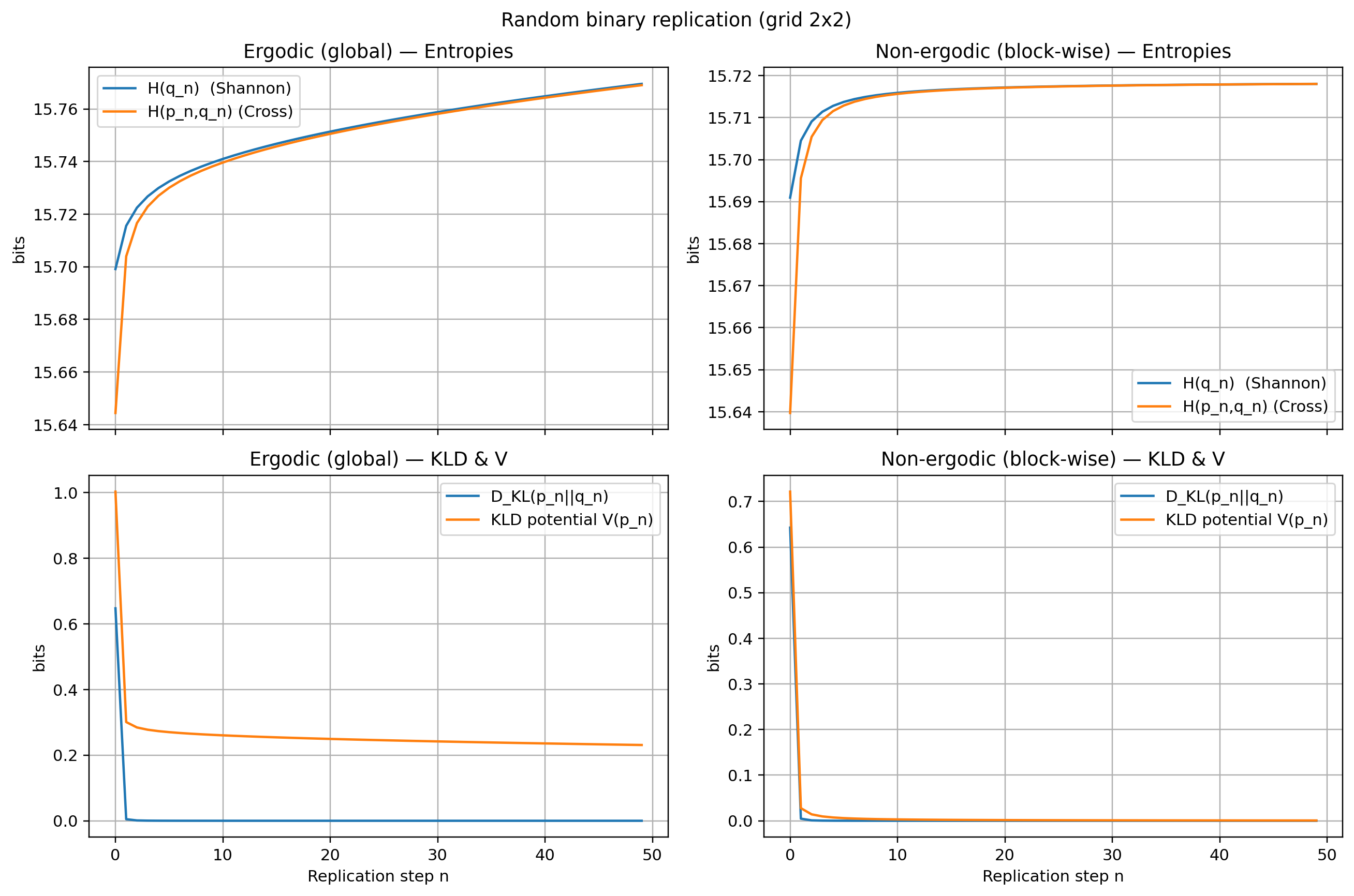}\vspace{0.5em}
    \includegraphics[width=0.70\linewidth]{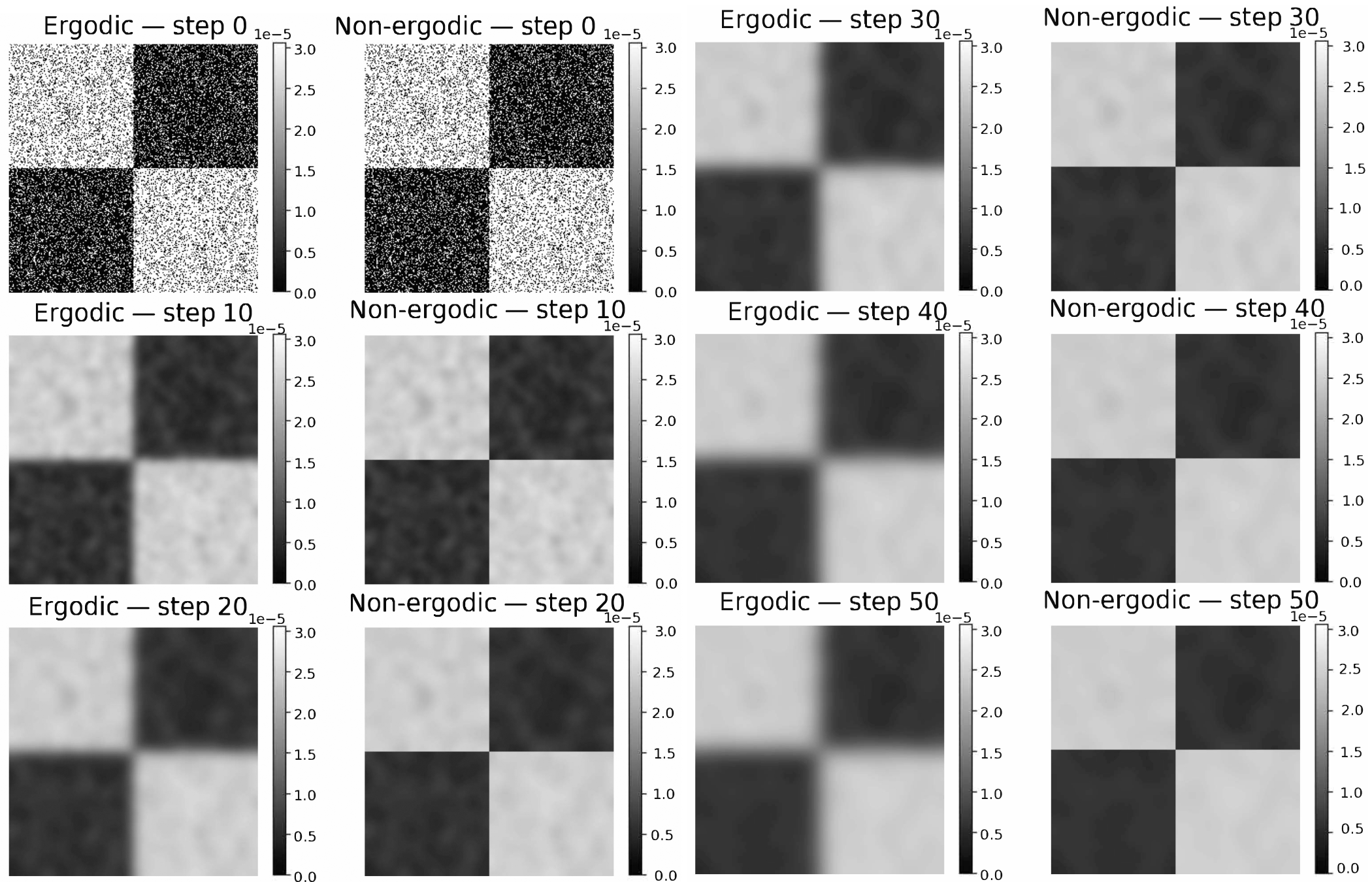}
    \blockstats{\epsTwo}{\delTwo} 
  \end{minipage}
\end{figure*}

\begin{figure*}[p]\caption{}\ContinuedFloat
  \centering
  \hspace*{\figindent}%
  \begin{minipage}{\dimexpr\textwidth-\figindent\relax}
    \centering
    \includegraphics[width=0.70\linewidth]{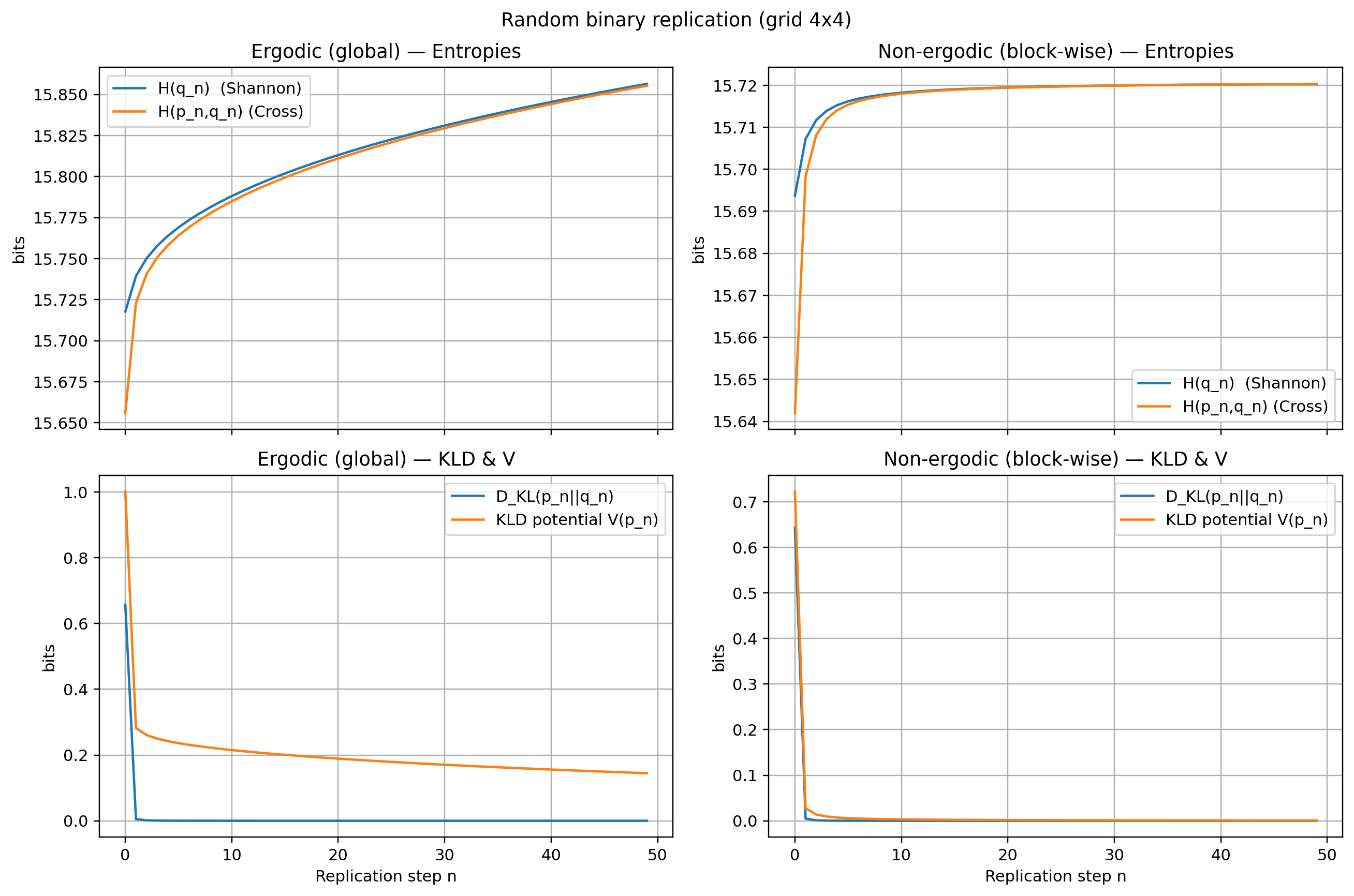}\vspace{0.5em}
    \includegraphics[width=0.70\linewidth]{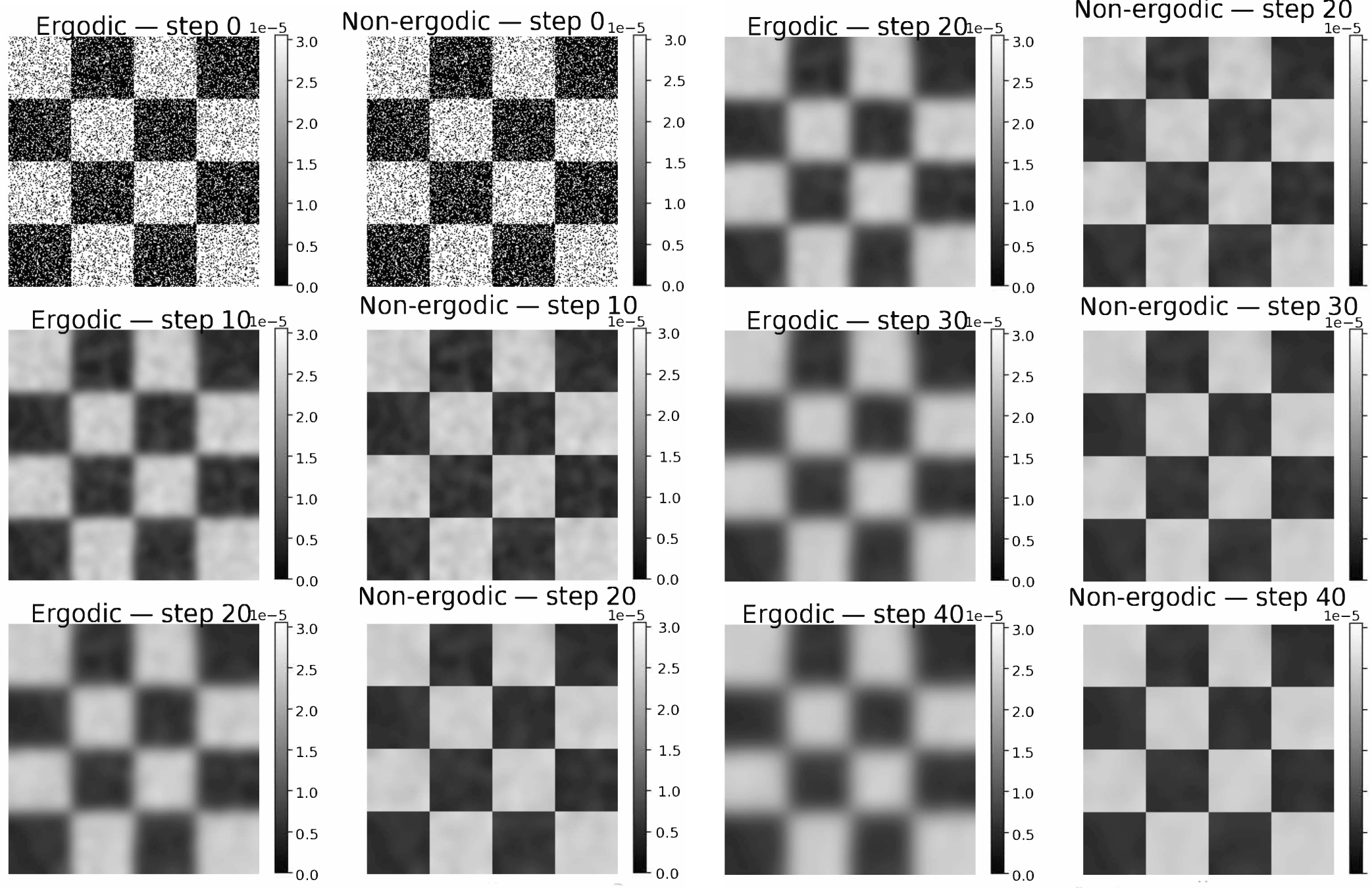}
    \blockstats{\epsFour}{\delFour} 
  \end{minipage}
\end{figure*}

\begin{figure*}[p]\ContinuedFloat
  \centering
  \hspace*{\figindent}%
  \begin{minipage}{\dimexpr\textwidth-\figindent\relax}
    \centering
    \includegraphics[width=0.70\linewidth]{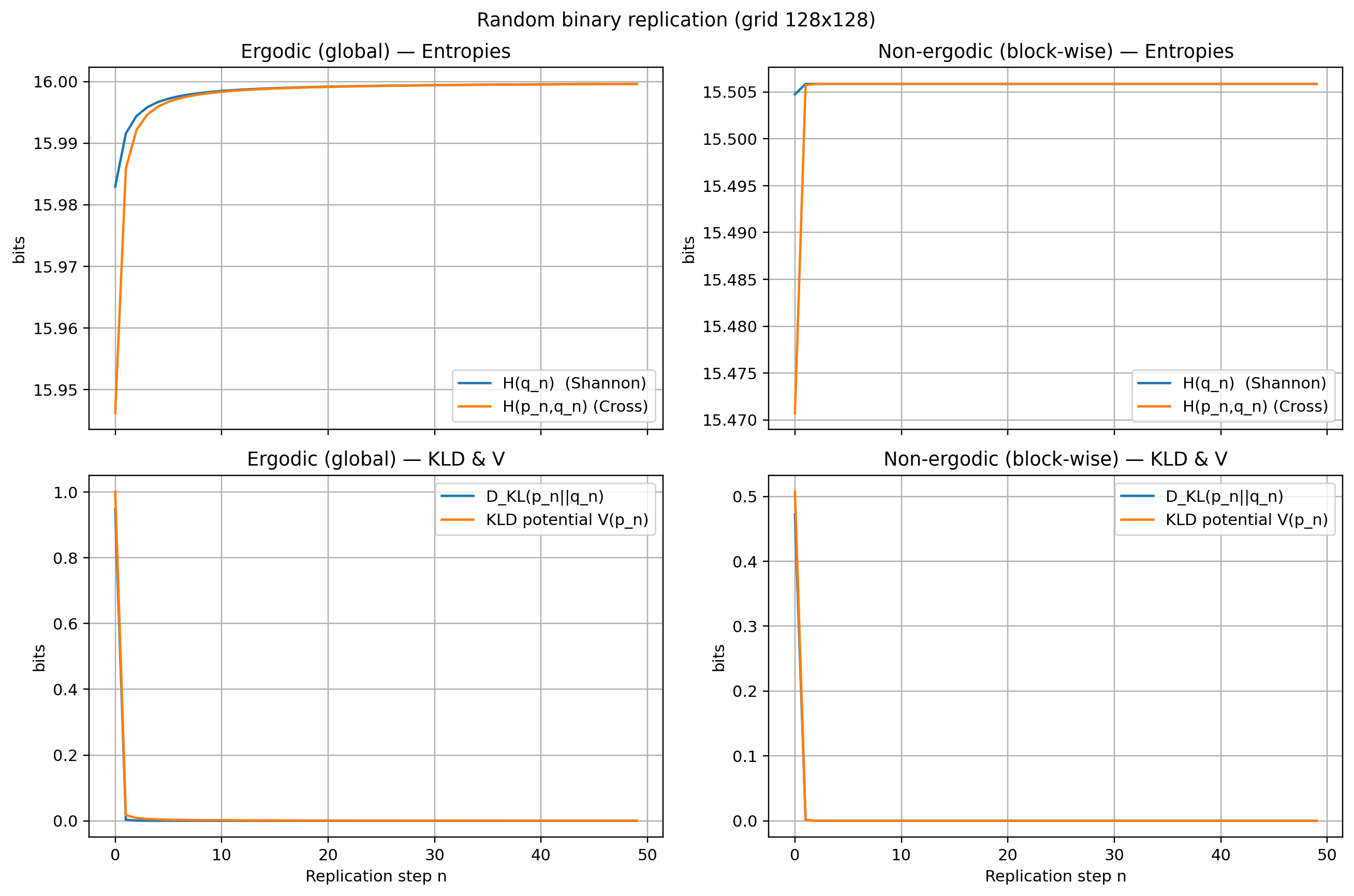}\vspace{0.5em}
    \includegraphics[width=0.70\linewidth]{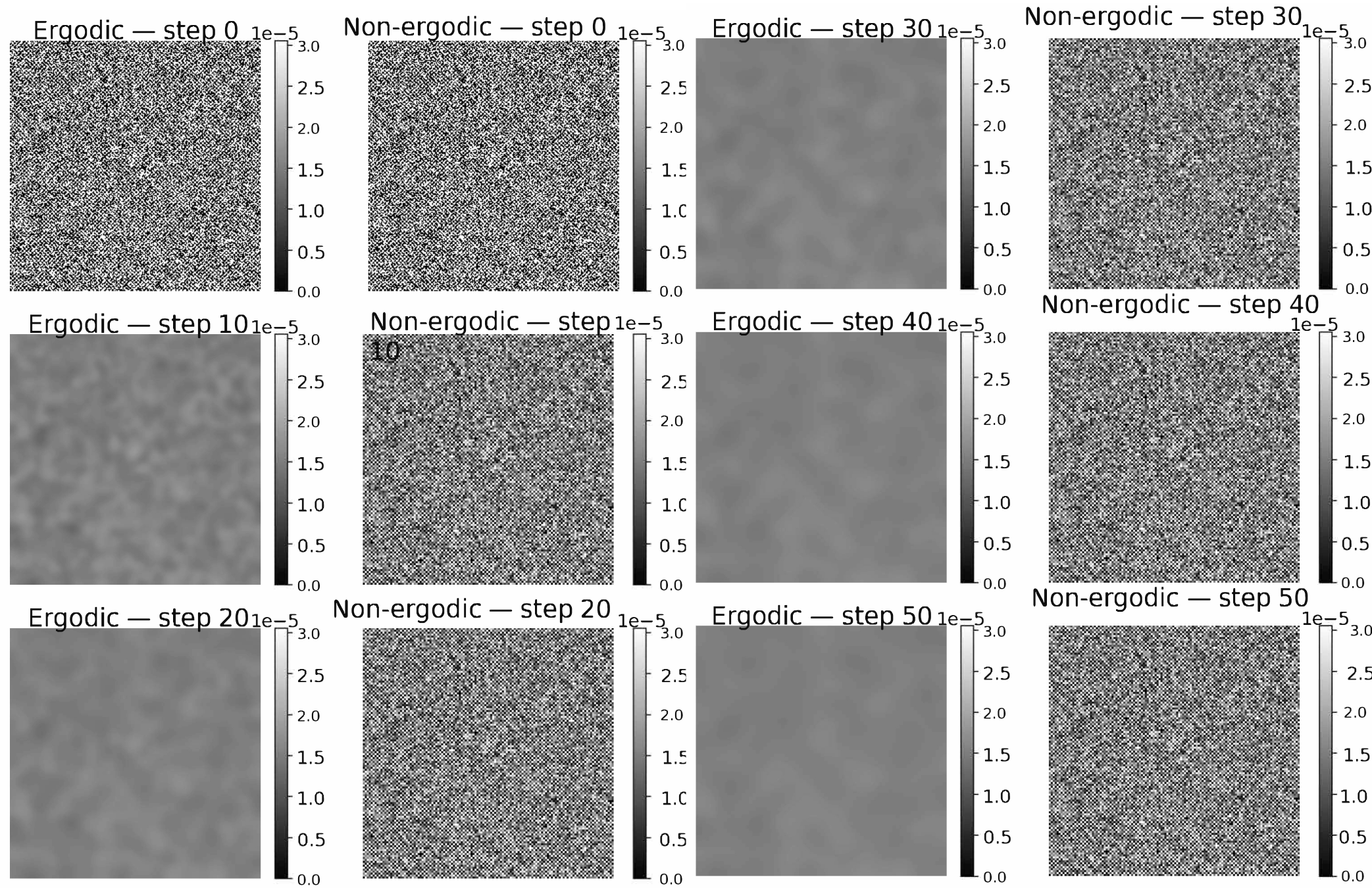}
    \blockstats{\epsHuge}{\delHuge} %
  \end{minipage}
 
  \caption[Ergodic vs.\ non-ergodic Gaussian blurring (bits)]{Ergodic (global) vs.\ non-ergodic (blockwise) Gaussian blurring of a $256\times256$ image ($\sigma=1.5$, $n_{\mathrm{steps}}=50$).
  At each step we report $H$~\eqref{eq:H-image}, $H_\times$~\eqref{eq:Hcross-image}, $D_{\mathrm{KL}}$~\eqref{eq:KL-image}, and $V$~\eqref{eq:V-image}.
  In the blockwise case, monotonicity of $V$ follows from Theorem~\ref{thm:V-Lyapunov}.
  (Upper) $2\times2$ blocks, (Middle) $4\times4$ blocks, (Lower) $128\times128$ blocks.
  All values in this figure are in \emph{bits} (base~2; see Appendix B).
  \par\smallskip
  \textit{Per-panel mixing strength (see Defs.~\eqref{eq:Doeblin-cond}--\eqref{eq:Dobrushin-def}).}
  Upper:\ $\varepsilon=\epsTwo$, $\delta=\delTwo$;\quad
  Middle:\ $\varepsilon=\epsFour$, $\delta=\delFour$;\quad
  Lower:\ $\varepsilon=\epsHuge$, $\delta=\delHuge$.}
  \label{fig:image}
\end{figure*}

\subsection{DNA replication as a block-diagonal substitution process}\label{subsec:dna}

\paragraph{Biophysical context.}
Second, we model DNA replication with the proofreading/repairing mechanism after copying. DNA polymerases move on copied DNA stochastically with forward/backward steps, proofreading, and context-dependent kinetics \cite{Wuite2000,Bustamante2003,Le2018,Gaspard2016a,Pineros2020}. 
To capture non-ergodicity induced by compositional structure (e.g. AT- vs. GC-rich segments), we model replication as a \emph{block-diagonal} Markov process on nucleotides.

\paragraph{Modeling stance.}
The AT/GC split is a biologically motivated \emph{stylized} coarse-graining:
it captures slow exchange between compositionally distinct pools. Small AT$\leftrightarrow$GC
exchange can be handled by the leakage tolerance potential $V_\delta$ or by
coarsening the partition (see ``Robustness to weak interblock leakage'').

\paragraph{Block-diagonal dynamics.}
Let $\mathcal{X}=\{\mathrm{A},\mathrm{T},\mathrm{C},\mathrm{G}\}$ with two blocks
$\mathcal{X}_1=\{\mathrm{A},\mathrm{T}\}$ and $\mathcal{X}_2=\{\mathrm{C},\mathrm{G}\}$. 
A one-step substitution kernel (copying) with block invariance \eqref{eq:block-invariance} is
\begin{equation}
\begin{aligned}
T   &= \operatorname{diag}(T_1,T_2),\\
T_1 &= \begin{pmatrix}
        1-\alpha & \alpha\\
        \beta    & 1-\beta
      \end{pmatrix},
\qquad
T_2 = \begin{pmatrix}
        1-\gamma & \gamma\\
        \delta   & 1-\delta
      \end{pmatrix}.
\end{aligned}
\label{eq:T-block}
\end{equation}

with $\alpha,\beta,\gamma,\delta\in(0,1)$ encoding effective substitution tendencies per step (arising from selectivity, context, and post-replicative processing). 
The block masses $w_1=p_{\mathrm A}+p_{\mathrm T}$ and $w_2=p_{\mathrm C}+p_{\mathrm G}$ are conserved by the lemma~\ref{lem:block-preservation}.

\paragraph{Proofreading/repair channel.}
To incorporate proofreading/repair, we add a blockwise operator $\mathcal{R}=\mathrm{diag}(R_1,R_2)$ that is invoked with probability $\rho\in[0,1]$ after extension:
\begin{equation}
  R_1=\begin{pmatrix}1-\alpha' & \alpha'\\[2pt]\beta' & 1-\beta'\end{pmatrix},\qquad
  R_2=\begin{pmatrix}1-\gamma' & \gamma'\\[2pt]\delta' & 1-\delta'\end{pmatrix},
  \label{eq:R-block}
\end{equation}
with typically $\alpha',\beta',\gamma',\delta' < \alpha,\beta,\gamma,\delta$ (improved selectivity).
The effective one-step kernel is the convex mixture.
\begin{equation}
  \widetilde{T}\ :=\ (1-\rho)\,T\ +\ \rho\,\mathcal{R},
  \label{eq:T-tilde}
\end{equation}
which remains block diagonal and therefore satisfies \eqref{eq:block-invariance}. 
The replication update reads
\begin{equation}
  q_n\ =\ p_n\,\widetilde{T},\qquad p_{n+1}\ =\ q_n.
  \label{eq:dna-update}
\end{equation}

\noindent\textit{Primitivity in this setting.}
For a two-state block $[[1-a,a],[b,1-b]]$, primitiveness is equivalent to $a>0$ and $b>0$ with $(a,b)\neq(1,1)$; in particular, $a,b\in(0,1)$ suffices. Under the mixture $\widetilde T=(1-\rho)T+\rho\mathcal R$, if $\mathcal R$ has strictly positive entries then any $\rho>0$ makes all effective entries positive, hence each block becomes primitive and $V(p_n)\to 0$.

\paragraph{KLD potential with unique block invariants.}
Assume that each block chain admits a unique invariant $\pi_j^\ast\in\mathcal{I}_j$. Then, combining \eqref{eq:V-def-intro} with \eqref{eq:KL-block-decomp} gives
\begin{equation}
  V(p_n)\ =\ \sum_{j=1}^{2} w_j(p_0)\,
  D_{\mathrm{KL}}\!\bigl(p_n^{(j)}\ \|\ \pi_j^\ast\bigr).
  \label{eq:V-dna-explicit}
\end{equation}
By Theorem~\ref{thm:V-Lyapunov}, $V$ is a Lyapunov function for the blockwise dynamics:
$V(p_{n+1})\le V(p_n)$ for all $n$, and $V(p_n)\to 0$ as $n\to\infty$.

\paragraph{Local detailed balance derivation.}
Within each two-state block, write the undriven rates as
\[
A \xrightleftharpoons[\ \beta\ ]{\ \alpha\ } T
\qquad\text{and}\qquad
C \xrightleftharpoons[\ \delta\ ]{\ \gamma\ } G.
\]
At equilibrium (no chemical drive), detailed balance with the block's stationary distribution $\pi^\ast$ gives
\[
\pi^{\ast}_{\!A}\,\alpha=\pi^{\ast}_{\!T}\,\beta
\ \Rightarrow\
\frac{\alpha}{\beta}
=\frac{\pi^{\ast}_{\!T}}{\pi^{\ast}_{\!A}},
\qquad
\frac{\gamma}{\delta}
=\frac{\pi^{\ast}_{\!G}}{\pi^{\ast}_{\!C}}.
\]
When a proofreading cycle provides a chemical affinity $\Delta\mu>0$ per round-trip, the local detailed balance (cycle thermodynamics) shifts \emph{log rate odds} by $\Delta\mu/k_{\mathrm B}T$:
\begin{equation}
\ln\frac{\alpha'}{\beta'}
=\ln\frac{\alpha}{\beta}+\frac{\Delta\mu}{k_{\mathrm B}T},
\qquad
\ln\frac{\gamma'}{\delta'}
=\ln\frac{\gamma}{\delta}+\frac{\Delta\mu}{k_{\mathrm B}T}.
\label{eq:LDB-shift}
\end{equation}
Equivalently, in odds form,
\begin{equation}
\frac{\alpha'}{\beta'} \;=\frac{\alpha}{\beta}\,\exp\!\Big(\frac{\Delta\mu}{k_{\mathrm B}T}\Big),
\qquad
\frac{\gamma'}{\delta'} \;=\frac{\gamma}{\delta}\,\exp\!\Big(\frac{\Delta\mu}{k_{\mathrm B}T}\Big).
\label{eq:proofreading-affinity-rewrite}
\end{equation}
Thus, the drive fixes \emph{odds} (ratios) but not individual rates; pinning $\alpha',\beta'$ (or $\gamma',\delta'$) separately requires an additional timescale constraint (for example, keeping $\alpha'+\beta'$ fixed), which we do not assume here. As is standard in stochastic thermodynamics, the exponential tilt of rate odds by the chemical cycle affinity follows from network cycle thermodynamics. Equations~\eqref{eq:LDB-shift}–\eqref{eq:proofreading-affinity-rewrite} encode local detailed balance for \emph{rate odds}; they shape steady odds and relaxation speed but do \emph{not} by themselves imply primitivity, which is a structural (irreducible+aperiodic) property; cf.\ Theorem~\ref{thm:ctmc-primitive}.

\noindent\emph{Note.} The exponential tilt in Eq.~\eqref{eq:proofreading-affinity-rewrite} biases \emph{odds} but does not by itself specify the steady entropy production rate; the latter can remain positive even when $V\to0$ under primitive blocks.

\paragraph{Energetic bias (phenomenology).}
In kinetic proofreading a chemical drive provides an affinity $\Delta\mu$ per cycle (e.g., via nucleotide hydrolysis). Within each invariant block, this drive tilts the two-state odds toward the proofreading-favored direction according to Eq.~\eqref{eq:proofreading-affinity-rewrite}, sharpening the steady composition. Because $V$ decomposes over blocks, a stronger drive typically steepens the within-block basins (reducing the Bernoulli variance around $\pi_j^\ast$) and thus accelerates the monotone decay of $V$ guaranteed by Theorem~\ref{thm:V-Lyapunov}. Crucially, the existence and monotonicity of $V$ do \emph{not} require a drive; the drive only modifies the speed and the steady odds.

\paragraph{Primitivity of $2\times2$ blocks and relation to Eq.~(35).}
For the two-state blocks in Example~2, a practical sufficient condition for block primitivity is strict positivity of every entry of the effective $2\times2$ transition kernel:
$\alpha_{\rm eff},\beta_{\rm eff},\gamma_{\rm eff},\delta_{\rm eff}\in(0,1)$. Then each block is irreducible and aperiodic (hence primitive), and \eqref{eq:V-dna-explicit} implies $V(p_n)\to0$.
If $\alpha_{\rm eff}=0$ or $\beta_{\rm eff}=0$, irreducibility fails; if $\alpha_{\rm eff}=\beta_{\rm eff}=1$ (and analogously for CG), period-2 oscillations arise unless additional repair/proofreading steps restore positive diagonal entries. 
Equation~\eqref{eq:proofreading-affinity-rewrite} (Eq.~35) expresses how $\Delta\mu$ biases \emph{rate ratios}; it determines the steady odds and relaxation speed but is not, by itself, a primitivity condition, which depends on positivity and aperiodicity of the effective kernels.

\smallskip
\noindent\emph{Notes.}
 The sign of $\Delta\mu$ sets the favored direction; reversing the operating bias corresponds to $\Delta\mu\!\to\!-\Delta\mu$.  At stationarity the block-wise odds inherit the exponential tilt:
$\pi_A/\pi_T=\beta'/\alpha'$ and $\pi_C/\pi_G=\delta'/\gamma'$, consistent with the proofreading picture of Hopfield~\cite{Hopfield1974}. This connects proofreading accuracy to energetic cost in line with kinetic proofreading thermodynamics \cite{Hopfield1974,SartoriPigolotti2015}.

\paragraph{Recorded metrics and bookkeeping.}

\noindent At each step, we record:  \(H(q_n)\), \(H_\times(n)=H(p_n,q_n)\), \(D_{\mathrm{KL}}(p_n\|q_n)\), and \(V(p_n)\),
with monotone $V$ ensured by Theorem~\ref{thm:V-Lyapunov}. 
Because $\widetilde T$ is block diagonal by construction, coarse transitions between $\{\mathrm{AT},\mathrm{GC}\}$ vanish at one step. Block-level confusion rates under small inter-block leakage are analyzed in the leakage section.

\paragraph{Readouts and figures.}
The trajectories in Fig.2 show an increase in $H$ and $H_\times$, a decrease in $D_{\mathrm{KL}}(p_n\|q_n)$, and a monotone decay of $V(p_n)$, in agreement with Theorem~\ref{thm:V-Lyapunov}. 
Figure 3 shows the potential landscape $V(p)=w_1 V_1(x)+w_2 V_2(y)$ on $(x,y)$, where $x$ is the A fraction in $\{\mathrm A,\mathrm T\}$ and $y$ the C fraction in $\{\mathrm C,\mathrm G\}$(Appendix D).

\begin{figure}[t]
  \centering
  \begin{minipage}{0.8\textwidth}
    \centering
    \includegraphics[width=\linewidth]{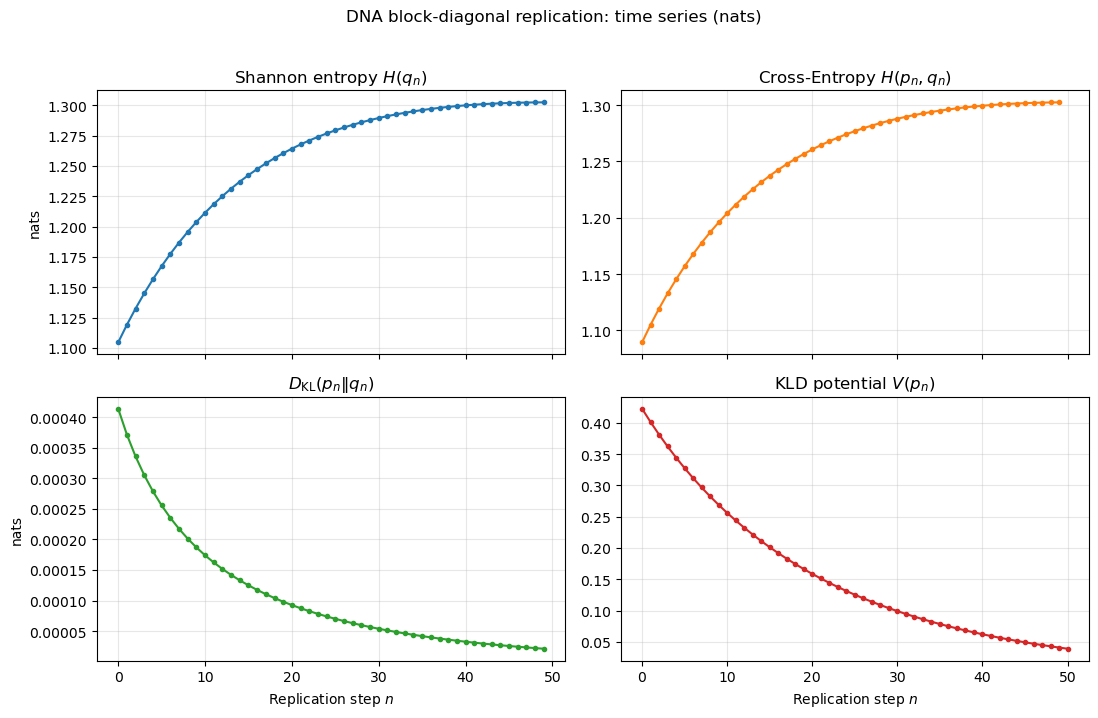}
    \vspace{1pt}\par\small 
  \end{minipage}\hfill
\caption{DNA block-diagonal replication: time series of informational metrics (nats).Shannon entropy, Cross entropy, KLD, and KLD potential.
We iterate a four-state Markov chain on $\{\mathrm{A},\mathrm{T},\mathrm{C},\mathrm{G}\}$ with a block-diagonal kernel $\widetilde T=\mathrm{diag}(T_1,T_2)$ for $N=50$ steps.
Base rates: $\alpha=0.020$, $\beta=0.010$, $\gamma=0.015$, $\delta=0.015$.
Proofreading channel $\mathcal R$ with $\alpha'=0.005$, $\beta'=0.003$, $\gamma'=\delta'=0.004$ is mixed at $\rho=0.30$, giving effective rates
$\alpha_{\!e}=0.0155$, $\beta_{\!e}=0.0079$, $\gamma_{\!e}=\delta_{\!e}=0.0117$ and block invariants
$\pi_1^\ast=(0.3376,\,0.6624)$, $\pi_2^\ast=(0.5,\,0.5)$.
Initial distribution $p_0=(0.6,0.1,0.2,0.1)$ so that $w_1(p_0)=0.7$, $w_2(p_0)=0.3$.
At each step $q_n=p_n\widetilde T$ and we record: top-left, Shannon entropy $H(q_n)$; top-right, cross-entropy $H(p_n,q_n)$; bottom-left, step divergence $D_{\mathrm{KL}}(p_n\|q_n)$; bottom-right, KLD potential
$V(p_n)=w_1(p_0)\,D_{\mathrm{KL}}(p_n^{(1)}\|\pi_1^\ast)+w_2(p_0)\,D_{\mathrm{KL}}(p_n^{(2)}\|\pi_2^\ast)$.
As replication proceeds, $H$ and $H_\times$ increase, while both $D_{\mathrm{KL}}(p_n\|q_n)$ and $V(p_n)$ decrease monotonically; the monotonicity of $V$ follows from Theorem~\ref{thm:V-Lyapunov}.All metrics are reported in \emph{nats}}

\end{figure}
\FloatBarrier
The contour plot shows the KLD potential \(V(p)=w_1 D_{\mathrm{KL}}\!\bigl(p^{(1)}\!\parallel\!\pi_1^\ast\bigr)+w_2 D_{\mathrm{KL}}\!\bigl(p^{(2)}\!\parallel\!\pi_2^\ast\bigr)\) on \((x,y)\in[0,1]^2\), where \(x\) is the fraction A in the AT block and \(y\) is the fraction C in the GC block. The unique minimum (blue marker) occurs at \((x^\ast,y^\ast)=(\pi_1^\ast(\mathrm A),\pi_2^\ast(\mathrm C))\approx(0.338,0.600)\), i.e., at the block-wise invariants determined by the effective rates. Because \(V\) is a sum of blockwise KL terms, the level sets are nearly axis-aligned ellipses (here close to circular since \(w_1=w_2=0.5\) and the curvatures are similar). A second-order expansion around \((x^\ast,y^\ast)\) gives \(V(x,y)\approx \tfrac{w_1}{2\,\mu(1-\mu)}(x-\mu)^2+\tfrac{w_2}{2\,\nu(1-\nu)}(y-\nu)^2\) with \(\mu=\pi_1^\ast(\mathrm A)\approx0.338\) and \(\nu=\pi_2^\ast(\mathrm C)=0.600\), yielding coefficients \(\simeq1.12\) and \(\simeq1.04\) (nats), respectively—consistent with the nearly isotropic contours. Increasing proofreading bias decreases \(\mu(1-\mu)\) or \(\nu(1-\nu)\), steepening the basin and accelerating the monotonic decay of \(V(p_n)\) (Theorem~\ref{thm:V-Lyapunov}). The separable, convex landscape highlights that, under block invariance, the non-ergodic relaxation proceeds independently within blocks and converges to the reachable steady point \((x^\ast,y^\ast)\)(Fig. 3).

\begin{figure}[t]
  \centering
  \includegraphics[width=0.5\textwidth]{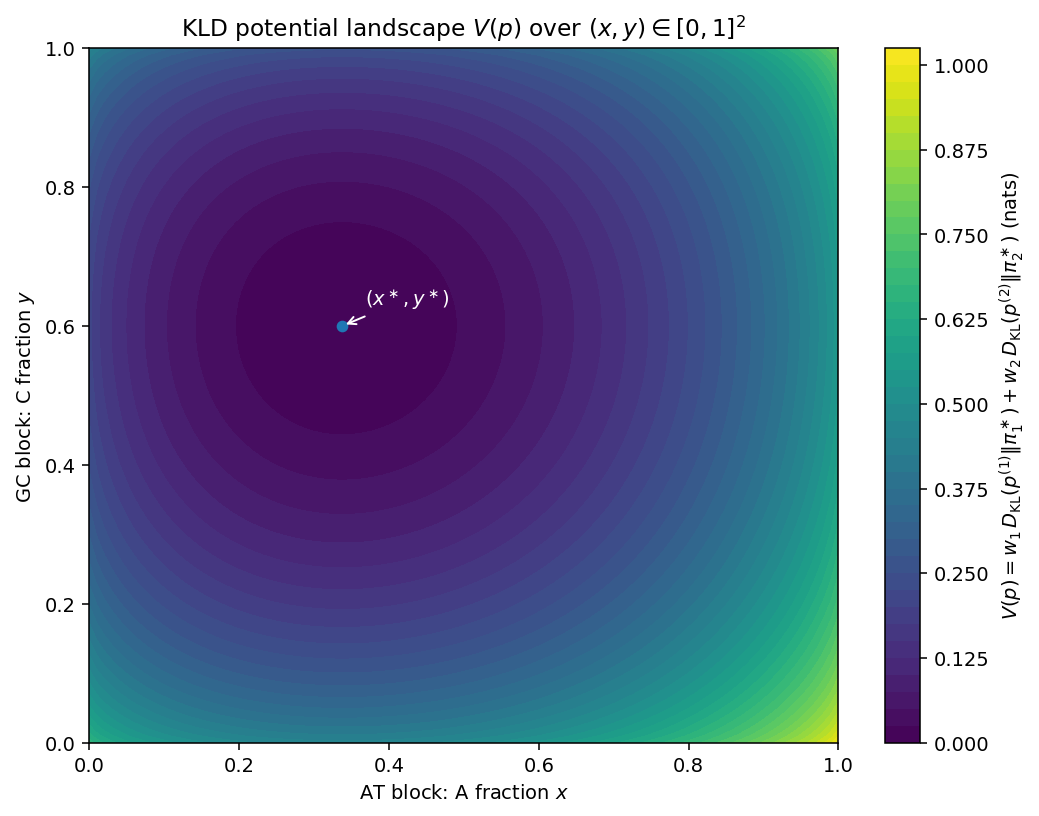}
 \caption{KLD potential landscape $V(p)=w_1 D_{\mathrm{KL}}(p^{(1)}\parallel\pi_1^\ast)+w_2 D_{\mathrm{KL}}(p^{(2)}\parallel\pi_2^\ast)$ (nats) over $(x,y)\in[0,1]^2$, where $x$ is the A fraction in the AT block and $y$ is the C fraction in the GC block. Proofreading mixture $\rho=0.30$; AT rates $\alpha=0.020$, $\beta=0.010$, $\alpha'=0.005$, $\beta'=0.003$ give $\alpha_{\mathrm{eff}}=0.0155$, $\beta_{\mathrm{eff}}=0.0079$ and $\pi_1^\ast=(\beta_{\mathrm{eff}}/(\alpha_{\mathrm{eff}}+\beta_{\mathrm{eff}}),,\alpha_{\mathrm{eff}}/(\alpha_{\mathrm{eff}}+\beta_{\mathrm{eff}}))\approx(0.338,,0.662)$. GC rates (asymmetric) $\gamma=0.014$, $\delta=0.021$, $\gamma'=0.004$, $\delta'=0.006$ give $\gamma_{\mathrm{eff}}=0.0110$, $\delta_{\mathrm{eff}}=0.0165$ and $\pi_2^\ast=(\delta_{\mathrm{eff}}/(\gamma_{\mathrm{eff}}+\delta_{\mathrm{eff}}),,\gamma_{\mathrm{eff}}/(\gamma_{\mathrm{eff}}+\delta_{\mathrm{eff}}))\approx(0.60,,0.40)$. Block weights $w_1=w_2=0.5$. The blue marker indicates $(x^\ast,y^\ast)=(\pi_1^\ast(\mathrm{A}),,\pi_2^\ast(\mathrm{C}))\approx(0.34,,0.60)$. Grid $101\times101$, colormap All values in nats.}
  \label{fig:dna-landscape}
\end{figure}

\FloatBarrier
\noindent\textit{Note on parameter sets.}
Figure 2 uses a symmetric GC block ($\gamma_{\!e}=\delta_{\!e}$; $\pi_2^\ast=(0.5,0.5)$), whereas Figure 3 uses an asymmetric GC block leading to $\pi_2^\ast=(0.60,0.40)$. The two figures illustrate, respectively, symmetric and biased within-block odds.

\section*{An information-theoretic second law}

\paragraph{Standing assumptions.}
We consider a discrete-time Markov evolution $p_{n+1}=T p_n$ on a finite state space,
with a \emph{block-invariant} kernel $T=\mathrm{diag}(T_1,\dots,T_m)$ as in
Sect.~\ref{sec:framework}. Let $q_n:=p_n T$ denote the one-step image of $p_n$.
The KLD potential is
\[
V(p)\;=\;\inf_{\pi\in\Pi(p_0)} D_{\mathrm{KL}}(p\Vert\pi),
\]
where $\Pi(p_0)$ is the reachable steady set (Def.~\eqref{eq:Pi-def}).
By Theorem~\ref{thm:V-Lyapunov}, $V(p_{n+1})\le V(p_n)$ for all $n$.

\paragraph{Step variables and their roles.}
Define the \emph{potential drop} and the \emph{one-step mismatch} by
\begin{equation}
\Delta V_n \;:=\; V(p_n)-V(p_{n+1}),\qquad
D_n \;:=\; D_{\mathrm{KL}}(p_n\Vert q_n),\quad q_n=p_nT.
\label{eq:defs-min-expanded}
\end{equation}
Nonnegativity holds as follows:
(i) $\Delta V_n\ge0$ by the Lyapunov monotonicity of $V$; 
(ii) $D_n\ge0$ because KLD is nonnegative (and $D_n=0$ iff $p_n=q_n$).

\paragraph{Decomposition of the step production.}
Define the \emph{dimensionless step production}
\begin{equation}
\mathcal{S}_n\;:=\;D_n+\Delta V_n.
\label{eq:Sn-def-expanded}
\end{equation}
It gathers the irreversible loss within one update ($D_n$) and the progress toward the reachable steady set ($\Delta V_n$).

\begin{proposition}[Nonnegativity and telescoping sum]
\label{prop:Sn-basic}
For any block-invariant trajectory $\{p_n\}_{n=0}^{N}$,
\begin{equation}
\mathcal{S}_n \;\ge\; 0 \quad (n=0,1,\dots,N-1),
\qquad
\mathcal{S}^{[0:N)}\;:=\;\sum_{n=0}^{N-1}\mathcal{S}_n
\;=\;\sum_{n=0}^{N-1}D_n \;+\; V(p_0)-V(p_N)\;\ge\;0.
\label{eq:sum-Sn-expanded}
\end{equation}
\end{proposition}

\begin{proof}
By \eqref{eq:defs-min-expanded} and Theorem~\ref{thm:V-Lyapunov}, $\Delta V_n\ge0$; KLD nonnegativity gives $D_n\ge0$, hence $\mathcal{S}_n\ge0$. Summing yields $\sum_n\Delta V_n=V(p_0)-V(p_N)$.
\end{proof}

Equation~\eqref{eq:sum-Sn-expanded} states that for any horizon $N$,
\[
\sum_{n=0}^{N-1} D_{\mathrm{KL}}(p_n\Vert p_nT)\;+\;V(p_0)-V(p_N)\;\ge\;0.
\]
Thus the accumulated one-step mismatch and the state-function decrease are jointly nonnegative—an information-theoretic second-law inequality under non-ergodic (block-invariant) dynamics. The key departure from the classical ergodic setting is that $V$ references the \emph{reachable steady set} rather than a single invariant distribution.

\begin{theorem}[Primitivity of block kernels]\label{thm:ctmc-primitive}
Let $T$ be a block transition kernel. If each block is primitive (irreducible and aperiodic), then the system admits a unique invariant distribution within each block, and the dynamics converge to this invariant.\end{theorem}

\begin{theorem}[Second law for non-ergodic replication: minimal form]
\label{thm:second-law-min-expanded}
Equivalently to Proposition~\ref{prop:Sn-basic}, for all $n$ and $N\ge1$,
\begin{equation}
\mathcal{S}_n \;\ge\; 0,
\qquad
\mathcal{S}^{[0:N)}\;\ge\;0.
\end{equation}
In physical units, set $\sigma_{\rm rep}^{(n)} := k_{\mathrm B}T\,\mathcal{S}_n$ and
$\Sigma_{\rm rep}^{(N)} := k_{\mathrm B}T\,\mathcal{S}^{[0:N)}$.
\end{theorem}

\begin{proof}
Immediate from Proposition~\ref{prop:Sn-basic}.
\end{proof}

\begin{remark}[Tightness conditions]\label{rem:strictness-expanded}
$\mathcal{S}_n=0$ iff $D_n=\Delta V_n=0$, i.e., (i) $p_n=q_n$ (a fixed point of $T$), and
(ii) $V(p_{n+1})=V(p_n)$, which requires blockwise tightness of DPI as in Remark~\ref{rem:strictness-expanded}.
Otherwise, $\mathcal{S}_n>0$.
\end{remark}

\paragraph{Thermodynamic reading.}
$V$ measures the distance to the \emph{reachable steady set}; its drop $k_{\mathrm B}T\,\Delta V_n$
plays the role of an informational free-energy decrease, while
$k_{\mathrm B}T\,D_n$ quantifies stepwise dissipative loss of distinguishability.
Hence $\mathcal{S}_n$ acts as a nonnegative production:
\[
\text{(instantaneous dissipation)}\ +\ \text{(state-function decrease)} \;\ge\; 0.
\]

The monotonicity of $V$ uses Lemma~\ref{lem:block-decomposition} and DPI applied \emph{blockwise},
$D_{\mathrm{KL}}(T_j r\Vert \pi_j)\le D_{\mathrm{KL}}(r\Vert \pi_j)$ with $\pi_jT_j=\pi_j$.
If blocks leak, $\Pi(p_0)$ ceases to align with the one-step map and $V$ may increase in one step
We describe the blocks leak in Appendix A.

\paragraph{Extensions (time-dependent/continuous-time).}
(1) \emph{Time-dependent kernels.} If each $T_n$ shares the same block partition holds blockwise, then $V(p_{n+1})\le V(p_n)$ still holds and the theorem remains valid with $q_n=p_n T_n$.
(2) \emph{Continuous-time CTMC.} For generator $Q$ and propagator $K(h)=e^{hQ}$,
small-$h$ expansions give $D_{\mathrm{KL}}(p\Vert pK(h))=\tfrac{h^2}{2}\,\dot{\mathcal{I}}(p;Q)+O(h^3)\ge0$,
while $V(p(t+h))\le V(p(t))$ for all $h>0$ (Theorem~\ref{thm:ctmc-primitive} ensures positivity/primitivity).
Integrating yields the continuous-time analogue:
\[
\int_0^t \lim_{h\downarrow0}\frac{1}{h}D_{\mathrm{KL}}(p(s)\Vert p(s)K(h))\,ds\;+\;V(p(0)) - V(p(t))\ \ge\ 0.
\]

\paragraph{Summary.}
We (i) defined $D_n$ and $\Delta V_n$, (ii) separated their sources of nonnegativity (KLD vs.\ Lyapunov monotonicity), (iii) derived the telescoping-form inequality, and (iv) characterized equality, interpretation, and extensions.
This mirrors Hatano--Sasa/Esposito--Van den Broeck–type formulations for Markov jump processes while crucially referencing a \emph{set} of reachable steady states rather than a single invariant measure.

\section{Discussion}\label{sec:discussion}

\paragraph{KLD potential as informational free energy.}
Under block invariance, the coarse variables $\{w_j\}$ are conserved (Lemma~\ref{lem:block-preservation}) while the within-block conditionals relax toward block-wise invariants. 
The potential $V$ in \eqref{eq:V-def-intro} therefore measures the nonequilibrium information retained beyond the conserved coarse structure. 
Its monotone decay \eqref{eq:V-monotone} provides an information-theoretic second law for non-ergodic replication: fine-scale distinguishability is irreversibly lost, whereas coarse composition is preserved.

\paragraph{Per-step bookkeeping and physical units.}
The minimal second law above decomposes the step production into two nonnegative parts,
$D_n$ and $\Delta V_n$, cf. (45). 
In physical units, the potential drop defines a \emph{informational free-energy} change.
\begin{equation}
\Delta F_{\mathrm{info}}^{(n)} := k_{\mathrm B}T\,\Delta V_n
= k_{\mathrm B}T\,[\,V(p_n)-V(p_{n+1})\,] \ \ge 0,
\label{eq:DeltaF-info}
\end{equation}
while $k_{\mathrm B}T\,D_n$ quantifies the stepwise dissipated ``informational heat.'' 
Together they account for the net degradation incurred by one replication step.

\paragraph{Ergodic vs.\ non-ergodic limits.}
If the reachable steady set $\Pi(p_0)$ collapses to a single invariant distribution (ergodic limit), $V$ reduces to the standard KL-to-steady Lyapunov function. 
Under non-ergodic block-invariant constraints, $V$ is \emph{minimal} KLD to the steady \emph{set} $\Pi(p_0)$, so its decay quantifies relaxation under conserved coarse masses. 
This distinction is essential for replication processes where structural constraints inhibit global mixing.

\paragraph{Tightness and equality cases.}
The equality in \eqref{eq:V-monotone} requires blockwise tightness of the data processing inequality (Remark~\ref{rem:strictness-expanded}); operationally, this corresponds to being in (or effectively in) a blockwise steady state. 
Despite such fixed points, the decrease is strict and $V$ serves as a sensitive progress variable for replication-driven relaxation.

\paragraph{Thermodynamic meaning of a non-zero asymptotic $V$.}
\noindent\textit{Interpretation at long times.}~%
The following clarifies the thermodynamic reading of a non-zero asymptotic $V$.
Our KLD potential $V$ quantifies distance to the \emph{reachable steady set} that preserves coarse variables (block masses). In a driven NESS with primitive blocks, $V(p_n)\!\to\!0$ even though steady dissipation (housekeeping heat) can remain positive; $V$ is a state function, not a direct measure of instantaneous entropy production. A strictly positive long-time value emerges only when the trajectory is persistently kept away from the reachable set, e.g.\ due to weak inter-block leakage, non-primitive blocks, or active enforcement of coarse masses. In that case, choosing $\pi_j=p^{(j)}$ yields
\[
V(p)=\sum_j w_j(p)\,\log\!\frac{w_j(p)}{w_j(p_0)},
\]
so $k_{\mathrm B}T\,V(p)$ is the minimal informational free-energy cost to \emph{refresh} the state back to the admissible coarse constraint. If refreshes occur with frequency $f$, the corresponding maintenance power satisfies $\mathcal P_{\min}\ge f\,k_{\mathrm B}T\,V_{\!*}$, where $V_{\!*}$ is the stationary value under the leak/drive. Thus, a non-zero asymptotic $V$ can be read as a lower bound on the energetic cost of maintaining a coarse-grained steady state while microscopic erasure/copying continuously occurs~\cite{HatanoSasa2001,EspositoVandenBroeck2010}.

\section{Conclusion and Outlook}\label{sec:conclusion}

We formulate replication as a discrete Markov map under block invariance and prove that \emph{minimal} KLD to the reachable steady set,
\[
V(p)=\inf_{\pi\in\Pi(p_0)}D_{\mathrm{KL}}(p\|\pi),
\]
is a Lyapunov function (Theorem~\ref{thm:V-Lyapunov}). 
This yields an information-theoretic second law for \emph{non-ergodic} replication, complementary to Landauer-type results for erasure. 
Instantiations in Gaussian image copying and DNA block-diagonal substitution with proofreading exhibit the predicted behavior: monotone $V$, increasing $H$ and cross-entropy, and decreasing $D_{\mathrm{KL}}(p_n\|q_n)$. 
Future directions include (i) force-resolved single-molecule tests of potential drops and no-free-copying bounds, (ii) extensions to multiscale block hierarchies and heterogeneous kinetics, and (iii) quantum/continuous-state analogues where contractive metrics (e.g.\ quantum $f$-divergences) may provide replication second laws under conserved coarse observables.

\FloatBarrier

\section*{Appendix A: Derivation and leakage threshold}
\label{app:robustness-derivation}

\paragraph{Closed-form expression for $V_\delta$ (corrected).}
Write $\pi=\sum_{j=1}^m w_j\,\pi_j$ with $w\in[0,1]^m$, $\sum_{j=1}^m w_j=1$ and $\pi_j\in\Delta(\mathcal{X}_j)$.
By the block decomposition of KL,
\[
D_{\mathrm{KL}}(p\Vert \pi)
=\sum_{j=1}^m w_j(p)\,D_{\mathrm{KL}}\!\bigl(p^{(j)}\Vert \pi_j\bigr)
 +\sum_{j=1}^m w_j(p)\,\log\!\frac{w_j(p)}{w_j}.
\]
For fixed $w$, the inner minimization is attained at $\pi_j=p^{(j)}$. Thus
\[
V_\delta(p)
=\min_{\substack{w:\ \sum_{j=1}^m w_j=1\\ a_j\le w_j\le b_j}}
\;\varphi(w),\qquad
\varphi(w):=\sum_{j=1}^m w_j(p)\,\log\!\frac{w_j(p)}{w_j},
\]
with $a_j:=w_j(p_0)-\delta$, $b_j:=w_j(p_0)+\delta$. This is a convex program. The Lagrangian
with multipliers $\tau\in\mathbb{R}$ (simplex) and $\mu_j^\pm\ge0$ (box) is
\[
\mathcal{L}(w,\tau,\mu^\pm)=
\sum_{j=1}^m w_j(p)\,\log\frac{w_j(p)}{w_j}
+\tau\!\left(\sum_{j=1}^m w_j-1\right)
+\sum_{j=1}^m \mu_j^+(w_j-b_j)+\sum_{j=1}^m \mu_j^-(a_j-w_j).
\]
KKT conditions (necessary and sufficient here) yield, for an optimum $(w^\star,\tau^\star,\mu^{\pm\star})$,
\[
-\frac{w_j(p)}{w_j^\star}+\tau^\star+\mu_j^{+\star}-\mu_j^{-\star}=0,\quad
\sum_{j=1}^m w_j^\star=1,\quad a_j\le w_j^\star\le b_j,
\]
\[
\mu_j^{+\star}(w_j^\star-b_j)=\mu_j^{-\star}(a_j-w_j^\star)=0,\qquad \mu_j^{\pm\star}\ge0.
\]
Hence interior coordinates ($a_j<w_j^\star<b_j$) satisfy $w_j^\star=w_j(p)/\tau^\star$; active coordinates
stick to the nearest bound. Equivalently, the solution is the \emph{scaled-and-clipped} vector
that also satisfies the simplex constraint:
\[
w_j^\star=\Bigl[\frac{w_j(p)}{\tau^\star}\Bigr]_{[\,a_j,\,b_j\,]},
\qquad
\sum_{j=1}^m w_j^\star=1,
\]
where $\tau^\star>0$ is uniquely determined by the normalization. Substituting back gives
\begin{equation}
\boxed{\;
V_\delta(p)
=\sum_{j=1}^{m} w_j(p)\,
\log\!\frac{w_j(p)}{\,w_j^\star\,},
\quad
w_j^\star=\Bigl[\frac{w_j(p)}{\tau^\star}\Bigr]_{[\,w_j(p_0)-\delta,\;w_j(p_0)+\delta\,]},
\ \ \sum_j w_j^\star=1.
\;}
\label{eq:Vdelta-closed-appendix}
\end{equation}
The associated minimizer is $\pi_j^\star=p^{(j)}$ on each block, i.e.\
$\pi^\star\!\restriction_{\mathcal{X}_j}=w_j^\star\,p^{(j)}$.

\paragraph{Continuity.}
As $\delta\to 0$, the box $[a_j,b_j]$ collapses to $w_j(p_0)$, so $w^\star\to w(p_0)$ and
$V_\delta(p)\to V(p)$.

\paragraph{Sufficient threshold for strict decrease under leakage.}
Let $w_n=(w_1(p_n),\dots,w_m(p_n))$ and $\Delta w:=w_{n+1}-w_n$. Define
\[
\kappa_n
:=\sum_{j=1}^m w_j(p_0)\Bigl[
D_{\mathrm{KL}}\!\bigl(p_n^{(j)}\Vert \pi_j^\ast\bigr)
-D_{\mathrm{KL}}\!\bigl(T_j p_n^{(j)}\Vert \pi_j^\ast\bigr)
\Bigr]\ \ge 0,
\qquad
w_{\min}:=\min_{1\le j\le m} w_j(p_0).
\]
Using the KL block decomposition between steps,
\[
\begin{aligned}
V(p_{n+1})-V(p_n)
&=\sum_{j=1}^m\!\Bigl[w_j(p_{n+1})-w_j(p_n)\Bigr]\!
\log\!\frac{w_j(p_{n+1})}{w_j(p_0)}\\
&\quad+\sum_{j=1}^m w_j(p_0)\!\Bigl[D_{\mathrm{KL}}\!\bigl(T_jp_n^{(j)}\Vert\pi_j^\ast\bigr)
-D_{\mathrm{KL}}\!\bigl(p_n^{(j)}\Vert\pi_j^\ast\bigr)\Bigr],
\end{aligned}
\]
the second line is $-\kappa_n$. By convexity of $x\mapsto x\log x$ and $w_j\ge w_{\min}$,
the first line is bounded by $\tfrac{1}{2w_{\min}}\lVert\Delta w\rVert_1^2$. Therefore
\[
V(p_{n+1})-V(p_n)\ \le\ \frac{1}{2w_{\min}}\lVert\Delta w\rVert_1^2-\kappa_n,
\]
and a sufficient condition for a strict one-step decrease is
\[
\boxed{\;
\lVert\Delta w\rVert_1<2\sqrt{\,w_{\min}\,\kappa_n\,}\ \Rightarrow\ V(p_{n+1})<V(p_n)\!.
\;}
\]

\section*{Appendix B:Simulation protocol for Fig.~1 (Gaussian-copy model)}
\label{app:sim-gaussian}

\noindent\textbf{Software and environment.}
All results in Fig.~1 were generated in Python (NumPy/SciPy/Matplotlib).
Gaussian smoothing uses \texttt{scipy.ndimage.gaussian\_filter} with
boundary condition \texttt{mode="reflect"}.

\medskip
\noindent\textbf{Domain and initialization.}
We use three types of pixel grid. The image domain is partitioned into a
regular $B_x\times B_y$ tiling of rectangular blocks (Fig.~1 uses
$B_x=B_y=4$; we also verified $2\times2$, $4\times4$ and $128\times128$).
Inside each block, a binary pattern is drawn i.i.d. from a Bernoulli law
with a checkerboard success probability:
even $(i{+}j)$ blocks use $0.8$, odd $(i{+}j)$ blocks use $0.2$.
The random seed is fixed to $7$. The resulting matrix is normalized to a
probability mass function (pmf)
$p_0=I_0/\!\sum_{x,y} I_0(x,y)$.

\medskip
\noindent\textbf{Replication step ($n\!\to\!n{+}1$).}
Let $p_n$ denote the current pmf on pixels. We set $\sigma=1.5$ pixels and
apply a Gaussian smoothing, followed by renormalization so that
$\sum_{x,y} q_n(x,y)=1$. Two cases are considered:
\begin{itemize}
\item \emph{Ergodic (global)}:
Gaussian smoothing is applied to the entire image to produce $q_n$.
\item \emph{Non-ergodic (blockwise)}:
The image is split into blocks; the same Gaussian is applied \emph{independently}
within each block (no cross-block smoothing). The blocks are then stitched
back together to form $q_n$. This preserves the total mass inside each block
at every step.
\end{itemize}
In both cases, we update $p_{n+1}=q_n$. We run $n_{\text{steps}}=50$ and
store snapshots at $n\in\{0,10,20,30,40,50\}$.

\medskip
\noindent\textbf{Information measures (base $=2$, in bits).}
At each step we compute
\begin{align}
H(q_n) &:= -\sum_{x,y} q_n \log_2 q_n,\\
H(p_n,q_n) &:= -\sum_{x,y} p_n \log_2 q_n,\\
D_{\mathrm{KL}}(p_n\|q_n) &:= \sum_{x,y} p_n \log_2\!\frac{p_n}{q_n},
\end{align}
where terms with $p_n=0$ or $q_n=0$ are omitted in the sum to avoid
$\log 0$. (To convert bits to nats multiply by $\ln 2$.)

\medskip
\noindent\textbf{KLD potential $V(p_n)$.}
We use the Lyapunov-type potential consistent with the main text:
\begin{itemize}
\item \emph{Ergodic (global) case:}
$V(p_n)=D_{\mathrm{KL}}(p_n\|\pi^\ast)$, where $\pi^\ast$ is the unique invariant histogram under global convolution dynamics (approximated numerically by iterating $T_\sigma$ to stationarity).
\item \emph{Non-ergodic (blockwise) case:}
Let $b$ index blocks, $w_b=\sum_{(x,y)\in b} p_n(x,y)$, $p_b$ the normalized histogram within the block, and $u_b$ the uniform histogram over block $b$. Then
\[
V(p_n)=\sum_{b} w_b\, D_{\mathrm{KL}}\!\bigl(p_b\|u_b\bigr),
\]
which coincides with the distance to the reachable steady set with fixed block masses.
\end{itemize}

\medskip
\noindent\textbf{Visualization and units.}
Time series plots show $H$, $H(p_n,q_n)$, $D_{\mathrm{KL}}(p_n\|q_n)$, and
$V(p_n)$ versus the replication step $n$. Snapshots at the specified steps
use a common grayscale colorbar. All values are reported in \textbf{bits}.

\medskip
\noindent\textbf{Sanity condition on $\sigma$.}
If the Gaussian width exceeds approximately one-quarter of the smallest block edge,
boundary effects may dominate. In Fig.~1 (block size $128{\times}128$,
$\sigma=1.5$) this issue does not arise. The same qualitative trends were
observed for $2{\times}2$ and $4{\times}4$ partitions.

\section*{Appendix C:Simulation details for Fig.~2 (time series)}

\paragraph*{Model reference \& parameters (Fig.~2).}
We use the two-block Markov model of App.~C.2
(block-diagonal kernel; one-step mixture $\widetilde T=(1-\rho)T+\rho\mathcal R$;
effective rates as in Eq.~\eqref{eq:eff-rates} and blockwise invariants $\pi_j^\ast$ therein).
For Fig.~2 we set (nats)
\[
\alpha  = 0.020,\ \beta  = 0.010,\ \gamma = \delta = 0.015,\quad
\alpha' = 0.005,\ \beta' = 0.003,\ \gamma' = \delta' = 0.004,\quad
\rho = 0.30,
\]
giving
\[
\alpha_{\! e}=0.0155,\ \beta_{\! e}=0.0079,\ 
\gamma_{\! e}=\delta_{\! e}=0.0117,
\]
and invariants
\[
\pi_1^\ast=\Bigl(\tfrac{\beta_{\! e}}{\alpha_{\! e}+\beta_{\! e}},\,
\tfrac{\alpha_{\! e}}{\alpha_{\! e}+\beta_{\! e}}\Bigr)\approx(0.3376,\,0.6624),\qquad
\pi_2^\ast=(0.5,\,0.5).
\]

\paragraph*{Initial condition and conserved masses.}
We initialize $p_0=(p_A,p_T,p_C,p_G)=(0.6,0.1,0.2,0.1)$ and normalize.
Block masses
\[
w_1(p):=p_A+p_T,\qquad w_2(p):=p_C+p_G
\]
are conserved by block invariance; we store $w_1(p_0)$ and $w_2(p_0)$ for the potential below.

\paragraph*{Recorded quantities (natural logs; nats).}
At step $n$ we set $q_n:=p_n\,\widetilde T$ (for $n<N$) and record
\[
H(q_n)=-\!\sum_x q_n(x)\log q_n(x),\quad
H_\times(n)=-\!\sum_x p_n(x)\log q_n(x),
\]
\[
D_{\mathrm{KL}}(p_n\|q_n)=\sum_x p_n(x)\log\!\frac{p_n(x)}{q_n(x)}.
\]
The KLD potential (Def.~\eqref{eq:V-def-intro}) reduces to the sum of blocks.
\[
V(p_n)=w_1(p_0)\,D_{\mathrm{KL}}\!\bigl(p_n^{(1)}\|\pi_1^\ast\bigr)
+w_2(p_0)\,D_{\mathrm{KL}}\!\bigl(p_n^{(2)}\|\pi_2^\ast\bigr),
\]
with $p_n^{(1)}=(p_A,p_T)/w_1(p_n)$ and $p_n^{(2)}=(p_C,p_G)/w_2(p_n)$.

\paragraph*{Iteration.}
We iterate
\[
p_{n+1}=q_n=p_n\,\widetilde T,\qquad n=0,1,\dots,N-1,
\]
with $N=50$. At each $n$ we first record $V(p_n)$; then, if $n<N$, we compute and
record $H(q_n)$, $H_\times(n)$, and $D_{\mathrm{KL}}(p_n\|q_n)$, and finally set $p_{n+1}=q_n$.

\section*{Appendix D: Simulation details for Fig.~3 (potential landscape)}

\paragraph*{D.1\quad Model and objective.}
We consider two invariant blocks $\mathcal{X}_1=\{\mathrm{A},\mathrm{T}\}$ and
$\mathcal{X}_2=\{\mathrm{C},\mathrm{G}\}$.  On the grid $(x,y)\in[0,1]^2$,
we parameterize a block-wise distribution.
\[
p(x,y)=\bigl(w_1 x,\; w_1(1-x),\; w_2 y,\; w_2(1-y)\bigr),
\]
where $x$ is the A fraction in the AT block and $y$ is the C fraction in the GC block.
The KLD potential (Def.~\eqref{eq:V-def-intro}) reduces to the separable sum
\[
V\bigl(p(x,y)\bigr)=
w_1\,D_{\mathrm{KL}}\!\bigl([x,1-x]\Vert \pi_1^\ast\bigr)\;+\;
w_2\,D_{\mathrm{KL}}\!\bigl([y,1-y]\Vert \pi_2^\ast\bigr),
\]
in nats.  We plot $V$ over $[0,1]^2$ and mark the minimum
$(x^\ast,y^\ast)=\bigl(\pi_1^\ast(\mathrm{A}),\,\pi_2^\ast(\mathrm{C})\bigr)$.

\paragraph*{D.2\quad Effective rates and steady states.}
One step is a convex mixture of an extension channel $T$ and a proofreading
channel $\mathcal{R}$ with probability $\rho=0.30$.  Each $2\times2$ block is
\[
T_1=\begin{pmatrix}1-\alpha&\alpha\\ \beta&1-\beta\end{pmatrix},\quad
R_1=\begin{pmatrix}1-\alpha'&\alpha'\\ \beta'&1-\beta'\end{pmatrix},
\]
\[
T_2=\begin{pmatrix}1-\gamma&\gamma\\ \delta&1-\delta\end{pmatrix},\quad
R_2=\begin{pmatrix}1-\gamma'&\gamma'\\ \delta'&1-\delta'\end{pmatrix},
\]
so, the effective rates are
\begin{equation}
\begin{aligned}
\alpha_{\!e}&=(1-\rho)\alpha+\rho\alpha', &
\beta_{\!e}&=(1-\rho)\beta+\rho\beta',\\
\gamma_{\!e}&=(1-\rho)\gamma+\rho\gamma', &
\delta_{\!e}&=(1-\rho)\delta+\rho\delta'.
\end{aligned}
\label{eq:eff-rates}
\end{equation}

\emph{AT block (Fig.~2):}
$\alpha=0.020,\ \beta=0.010,\ \alpha'=0.005,\ \beta'=0.003$
$\Rightarrow\ \alpha_{\!e}=0.0155,\ \beta_{\!e}=0.0079$.
\emph{GC block:}
$\gamma=0.014,\ \delta=0.021,\ \gamma'=0.004,\ \delta'=0.006$
$\Rightarrow\ \gamma_{\!e}=0.0110,\ \delta_{\!e}=0.0165$.
The blockwise invariants (for $[[1-a,a],[b,1-b]]$) are
\[
\pi_1^\ast=\Bigl(\tfrac{\beta_{\!e}}{\alpha_{\!e}+\beta_{\!e}},\;
\tfrac{\alpha_{\!e}}{\alpha_{\!e}+\beta_{\!e}}\Bigr)
\approx(0.338,\,0.662),
\]
\[
\pi_2^\ast=\Bigl(\tfrac{\delta_{\!e}}{\gamma_{\!e}+\delta_{\!e}},\;
\tfrac{\gamma_{\!e}}{\gamma_{\!e}+\delta_{\!e}}\Bigr)
=(0.60,\,0.40).
\]
We take $(w_1,w_2)=(0.5,0.5)$ unless otherwise noted.  The blue marker in
Fig.~3 is placed at $(x^\ast,y^\ast)=(\pi_1^\ast(\mathrm{A}),\pi_2^\ast(\mathrm{C}))
\approx(0.338,0.600)$.

\paragraph*{D.3\quad Grid and evaluation.}
We use a $101\times101$ grid on $[0,1]^2$.  For Bernoulli pairs we evaluate
\[
D_{\mathrm{KL}}\!\bigl([p,1-p]\Vert[q,1-q]\bigr)
= p\log\tfrac{p}{q}+(1-p)\log\tfrac{1-p}{1-q},
\]
with safe clipping to avoid $\log 0$.  The color scale shows $V$ in nats
(``viridis'' colormap).  Axes: $x$ = A fraction (AT), $y$ = C fraction (GC).

\paragraph*{D.4\quad Minimal Python excerpt (reproducibility).}

\begin{verbatim}
import numpy as np

# mixture rate
rho = 0.30

# AT block (same as Fig.2)
alpha, beta   = 0.020, 0.010
alpha_p, beta_p = 0.005, 0.003
ae = (1 - rho)*alpha + rho*alpha_p      # 0.0155
be = (1 - rho)*beta  + rho*beta_p       # 0.0079

# GC block (asymmetric, to place y* \approx 0.60)
gamma, delta   = 0.014, 0.021
gamma_p, delta_p = 0.004, 0.006
ge = (1 - rho)*gamma + rho*gamma_p      # 0.0110
de = (1 - rho)*delta + rho*delta_p      # 0.0165

# block weights
w1, w2 = 0.5, 0.5

# invariants for [[1-a, a],[b, 1-b]]
pi1_A = be/(ae+be);  pi1_T = ae/(ae+be)     # $\approx$ (0.338, 0.662)
pi2_C = de/(ge+de);  pi2_G = ge/(ge+de)     #   (0.600, 0.400)

def kl_bernoulli(p, q, eps=1e-12):
    p = np.clip(p, eps, 1.0 - eps)
    q = np.clip(q, eps, 1.0 - eps)
    return p*np.log(p/q) + (1-p)*np.log((1-p)/(1-q))

# grid and potential values (nats)
N = 101
xs = np.linspace(0.0, 1.0, N)   # A-fraction in AT
ys = np.linspace(0.0, 1.0, N)   # C-fraction in GC
X, Y = np.meshgrid(xs, ys, indexing='xy')
V = w1*kl_bernoulli(X, pi1_A) + w2*kl_bernoulli(Y, pi2_C)

# minimum location for the blue marker:
x_star, y_star = pi1_A, pi2_C
\end{verbatim}

\noindent
This excerpt computes the potential field $V(p)$ on the grid and the minimum
$(x^\ast,y^\ast)$.  The plotted Fig.~3 uses a filled contour of $V$, the
``viridis'' colormap, and annotates $(x^\ast,y^\ast)$ with a blue marker.

\section*{Appendix E:Block primitivity in DNA Example and convergence rate}
\label{app:block-primitive}
\paragraph{Necessary and sufficient conditions for two-state blocks.}
For $T=\begin{psmallmatrix}1-a & a\\ b & 1-b\end{psmallmatrix}$, 
primitive $\iff a>0,\ b>0$ and $(a,b)\neq(1,1)$. 
Hence $a,b\in(0,1)$ suffices. 
\paragraph{Effect of proofreading/repair mixing.}
If $\mathcal R$ has strictly positive entries and $\rho>0$, then all effective entries of $\widetilde T=(1-\rho)T+\rho\mathcal R$ are strictly positive, implying primitivity even when $T$ is period-2 ($a=b=1$).
\paragraph{Relation to Eq.~\eqref{eq:proofreading-affinity-rewrite} and detailed balance.}
Eq.~\eqref{eq:proofreading-affinity-rewrite} encodes local detailed balance on the enlarged network; the coarse two-state block remains reversible with invariant $\pi^\ast$, so Eq.~(35) sets steady odds and speeds but is not the primitivity criterion.
\paragraph{Convergence rate (sketch).}
The second eigenvalue of a two-state block is $\lambda_2=1-(a+b)$; thus total variation decays geometrically with factor $|\lambda_2|$.

\section*{Appendix F: A minimal non–block–invariant counterexample and a leakage threshold}
\label{app:leak-threshold}

\paragraph{Symbols and definitions.}
State space and blocks:
\[
\mathcal{X}=\{a_1,a_2,b\},\qquad
\mathcal{X}_1=\{a_1,a_2\},\qquad
\mathcal{X}_2=\{b\}.
\]
Initial distribution \(p_0(a_1,a_2,b)=(0.6,0.3,0.1)\).
Block (coarse) masses:
\[
w_1(p)=p(a_1)+p(a_2),\qquad
w_2(p)=p(b).
\]

\paragraph{Reachable steady set (coarse masses fixed).}
\[
\Pi(p_0)=\Bigl\{\sum_{j=1}^m w_j(p_0)\,\pi_j:\ \pi_j\in\mathcal{I}_j\Bigr\}.
\]

\paragraph{One-step update with a non–block–invariant kernel.}
\[
T=
\begin{bmatrix}
0.7&0.2&0.1\\
0.1&0.7&0.2\\
0&0&1
\end{bmatrix},
\qquad
p_1=p_0\,T.
\]
Cross–block entries are \(T_{a_1\to b}=0.1\) and \(T_{a_2\to b}=0.2\).
Define the net leak
\[
\mathcal{L}:=w_2(p_1)-w_2(p_0),\qquad
w_1(p_1)=p_1(a_1)+p_1(a_2),\ \ w_2(p_1)=p_1(b).
\]
(Numerical values can be read off directly from \(p_1\).)

\paragraph{Block KL decomposition and consequence.}
For any \(\pi=\sum_{j} w_j(p_0)\,\pi_j \in \Pi(p_0)\),
\[
\begin{aligned}
D_{\mathrm{KL}}(p\Vert \pi)
&=\sum_{j} w_j(p)\,D_{\mathrm{KL}}\!\bigl(p^{(j)}\Vert \pi_j\bigr)
 \;+\;\sum_{j} w_j(p)\,\log\!\frac{w_j(p)}{w_j(p_0)}.
\end{aligned}
\]
Since \(\pi_1\) is free, the minimizer sets \(\pi_1=p^{(1)}\); for the singleton block,
\(p^{(2)}=\pi_2=\delta_b\). Therefore
\[
V(p)=\inf_{\pi\in \Pi(p_0)} D_{\mathrm{KL}}(p\Vert \pi)
=\sum_{j} w_j(p)\,\log\!\frac{w_j(p)}{w_j(p_0)}.
\]
(Thus \(V\) measures the coarse-mass mismatch when leakage is present.)

\paragraph{A sufficient leakage threshold for one-step decrease.}
Let \(w_n=(w_1(p_n),\dots,w_m(p_n))\) and \(\Delta w:=w_{n+1}-w_n\). Define
\[
\kappa_n
:=\sum_{j} w_j(p_0)\Bigl[
D_{\mathrm{KL}}\!\bigl(p_n^{(j)}\Vert \pi_j^\ast\bigr)
-D_{\mathrm{KL}}\!\bigl(T_j p_n^{(j)}\Vert \pi_j^\ast\bigr)
\Bigr]\ \ge 0,
\qquad
w_{\min}:=\min_{j} w_j(p_0).
\]
A convenient sufficient condition for a strict one–step decrease is
\begin{equation}
\boxed{\;
\lVert\Delta w\rVert_1
\;<\; 2\,\sqrt{\,w_{\min}\,\kappa_n\,}
\ \Rightarrow\
V(p_{n+1})<V(p_n)\!.
\;}
\label{eq:leak-threshold}
\end{equation}

\paragraph{Proof sketch / origin of the constants.}
By the block KL decomposition,
\[
\begin{aligned}
V(p_{n+1})-V(p_n)
&=\sum_{j} \Bigl[w_j(p_{n+1})\log\!\frac{w_j(p_{n+1})}{w_j(p_0)}
-w_j(p_{n})\log\!\frac{w_j(p_{n})}{w_j(p_0)}\Bigr]\\
&=\underbrace{\sum_{j}\Delta w_j\,\log\!\frac{w_j(p_n)}{w_j(p_0)}}_{\text{linear term}}
+\underbrace{\sum_{j}\bigl[w_j(p_{n+1})-w_j(p_{n})\bigr]\log\!\frac{w_j(p_{n+1})}{w_j(p_n)}}_{\text{quadratic term}}.
\end{aligned}
\]
The quadratic term is upper bounded by \(\tfrac{1}{2w_{\min}}\lVert\Delta w\rVert_1^2\) via convexity of \(x\mapsto x\log x\) and \(w_j\ge w_{\min}\).
The linear term is at most \(-\,\kappa_n\) by definition of \(\kappa_n\) (within-block contraction toward \(\pi_j^\ast\)).
Combining the two gives
\[
V(p_{n+1})-V(p_n)\ \le\ \frac{1}{2w_{\min}}\lVert\Delta w\rVert_1^2\;-\;\kappa_n,
\]
which yields the threshold \eqref{eq:leak-threshold}.

\section*{Appendix G: Non-primitive block with a changing minimizer}
\label{app:changing-minimizer}

\noindent\emph{Notation.} Here $K$ denotes a \textbf{within-block} abstract Markov kernel (not the image-induced $T_\sigma$ of Sec.~\ref{subsec:image}).

Consider a single block $\mathcal X_j=\{u,v,t\}$ with a \emph{reducible} kernel
\begin{equation}
K\;=\;
\begin{pmatrix}
1 & 0 & 0\\
0 & 1 & 0\\
\varepsilon & 1-\varepsilon & 0
\end{pmatrix},
\qquad \varepsilon\in(0,1).
\label{eq:F-reducible-K}
\end{equation}
States $u$ and $v$ are absorbing; $t$ is transient and jumps to $u$ or
$v$ in one step with probabilities $\varepsilon$ and $1-\varepsilon$,
respectively. Hence the invariant set is
\begin{equation}
\mathcal I_j
= \bigl\{\,\lambda\,\delta_u+(1-\lambda)\,\delta_v:\ \lambda\in[0,1]\,\bigr\}.
\label{eq:F-invariant-set}
\end{equation}
Let $p^{(j)}_n$ denote the within-block law started from
$p^{(j)}_0=(0,0,1)$. Then
\begin{equation}
p^{(j)}_1=(\varepsilon,1-\varepsilon,0),\qquad
p^{(j)}_n=p^{(j)}_1\ \ \text{for all }n\ge1.
\label{eq:F-traject-const}
\end{equation}
For any $\lambda\in[0,1]$,
\begin{equation}
D_{\mathrm{KL}}\!\bigl(p^{(j)}_1\;\Vert\;\lambda\,\delta_u+(1-\lambda)\,\delta_v\bigr)
=\varepsilon\log\tfrac{\varepsilon}{\lambda}
+(1-\varepsilon)\log\tfrac{1-\varepsilon}{1-\lambda},
\label{eq:F-Dkl-lambda}
\end{equation}
which is minimized at $\lambda=\varepsilon$.

\paragraph*{Nonincrease of $V$ and minimizer switching.}
If the global partition weight on this block is $w_j(p_n)$ and other blocks are fixed,
the contribution of block $j$ to $V$ is $w_j(p_0)\,D_{\mathrm{KL}}\!\bigl(p_n^{(j)}\Vert\pi_j\bigr)$
for some $\pi_j\in\mathcal I_j$. From \eqref{eq:F-traject-const}–\eqref{eq:F-Dkl-lambda},
for $n\ge1$ the minimizer is $\pi_j^\star=\varepsilon\,\delta_u+(1-\varepsilon)\,\delta_v$ and
\[
V\ \text{is constant for }n\ge 1 \ \text{(hence nonincreasing)}.
\]
If we perturb the initial condition to
$p^{(j)}_0=(0,0,1-\eta)+(\eta,0,0)$ with a small $\eta>0$, then at time $0$ the
minimizer is $\lambda^\star_0>\varepsilon$, whereas for $n\ge1$ it becomes
$\lambda^\star_1=\varepsilon$. Thus the blockwise minimizer \emph{changes between steps},
yet $V$ remains nonincreasing by Theorem~\ref{thm:V-Lyapunov}. This shows that
minimizer switching is generic whenever $\mathcal I_j$ is not a singleton (reducible blocks).


\begin{thebibliography}{99}

\bibitem{HershbergPetrov2010}
R. Hershberg and D. A. Petrov.
Evidence That Mutation Is Universally Biased Towards AT in Bacteria.
\emph{PLoS Genet.}, \textbf{6}(9):e1001115, 2010.
\href{https://doi.org/10.1371/journal.pgen.1001115}{doi: 10.1371/journal.pgen.1001115}.

\bibitem{FryxellMoon2005}
K. J. Fryxell and W. J. Moon.
CpG Mutation Rates in the Human Genome Are Highly Dependent on Local GC Content.
\emph{Mol. Biol. Evol.}, \textbf{22}(3):650--658, 2005.
\href{https://doi.org/10.1093/molbev/msi043}{doi: 10.1093/molbev/msi043}.

\bibitem{Cooper2010}
D. N. Cooper, M. Mort, P. D. Stenson, E. V. Ball, and N. A. Chuzhanova.
Methylation-mediated deamination of 5-methylcytosine appears to give rise to mutations causing human inherited disease in CpNpG trinucleotides, as well as in CpG dinucleotides.
\emph{Hum. Genomics}, \textbf{4}(6):406, 2010.
\href{https://doi.org/10.1186/1479-7364-4-6-406}{doi: 10.1186/1479-7364-4-6-406}.

\bibitem{AggarwalaVoight2016}
V. Aggarwala and B. F. Voight.
An expanded sequence context model broadly explains variability in polymorphism levels across the human genome.
\emph{Nat. Genet.}, \textbf{48}(4):349--355, 2016.
\href{https://doi.org/10.1038/ng.3511}{doi: 10.1038/ng.3511}.

\bibitem{DuretGaltier2009}
L. Duret and N. Galtier.
Biased gene conversion and the evolution of mammalian genomic landscapes.
\emph{Annu. Rev. Genom. Hum. Genet.}, \textbf{10}:285--311, 2009.
\href{https://doi.org/10.1146/annurev-genom-082908-150001}{doi: 10.1146/annurev-genom-082908-150001}.

\bibitem{CapraPollard2013}
J. A. Capra, M. J. Hubisz, D. Kostka, K. S. Pollard, and A. Siepel.
A Model-Based Analysis of GC-Biased Gene Conversion in the Human and Chimpanzee Genomes.
\emph{PLoS Genet.}, \textbf{9}(8):e1003684, 2013.
\href{https://doi.org/10.1371/journal.pgen.1003684}{doi: 10.1371/journal.pgen.1003684}.

\bibitem{Weber2014}
C. C. Weber, B. Boussau, J. Romiguier, E. D. Jarvis, and H. Ellegren.
Evidence for GC-biased gene conversion as a driver of between-lineage differences in avian base composition.
\emph{Genome Biol.}, \textbf{15}(12):549, 2014.
\href{https://doi.org/10.1186/s13059-014-0549-1}{doi: 10.1186/s13059-014-0549-1}.

\bibitem{Wuite2000}
G. J. L. Wuite, S. B. Smith, M. Young, D. Keller, and C. Bustamante.
Single-Molecule Studies of the Effect of Template Tension on T7 DNA Polymerase Activity.
\emph{Nature}, \textbf{404}(6773):103--106, 2000.
\href{https://doi.org/10.1038/35003614}{doi: 10.1038/35003614}.

\bibitem{Bustamante2003}
C. Bustamante, Z. Bryant, and S. B. Smith.
Ten years of tension: single-molecule DNA mechanics.
\emph{Nature}, \textbf{421}(6921):423--427, 2003.
\href{https://doi.org/10.1038/nature01405}{doi: 10.1038/nature01405}.

\bibitem{Le2018}
T. T. Le, Y. Yang, C. Tan, M. M. Suhanovsky, R. M. Fulbright Jr., J. T. Inman, M. Li, J. Lee, S. Perelman, J. W. Roberts, A. M. Deaconescu, and M. D. Wang.
Mfd Dynamically Regulates Transcription via a Release and Catch-Up Mechanism.
\emph{Cell}, \textbf{172}(1--2):344--357.e15, 2018.
\href{https://doi.org/10.1016/j.cell.2017.11.017}{doi: 10.1016/j.cell.2017.11.017}.

\bibitem{Gaspard2016a}
P. Gaspard.
Kinetics and Thermodynamics of DNA Polymerases without Exonuclease Activity.
\emph{Phys. Rev. E}, \textbf{93}(4):042419, 2016.
\href{https://doi.org/10.1103/PhysRevE.93.042419}{doi: 10.1103/PhysRevE.93.042419}.

\bibitem{Pineros2020}
W. D. Pi{\~n}eros and T. Tlusty.
Kinetic Proofreading and the Limits of Thermodynamic Uncertainty.
\emph{Phys. Rev. E}, \textbf{101}(2):022415, 2020.
\href{https://doi.org/10.1103/PhysRevE.101.022415}{doi: 10.1103/PhysRevE.101.022415}.

\bibitem{Hopfield1974}
J. J. Hopfield.
Kinetic Proofreading: A New Mechanism for Reducing Errors in Biosynthetic Processes Requiring High Specificity.
\emph{Proc. Natl. Acad. Sci. U.S.A.}, \textbf{71}(10):4135--4139, 1974.
\href{https://doi.org/10.1073/pnas.71.10.4135}{doi: 10.1073/pnas.71.10.4135}.

\bibitem{SartoriPigolotti2015}
P. Sartori and S. Pigolotti.
Thermodynamics of Error Correction.
\emph{Phys. Rev. X}, \textbf{5}(4):041039, 2015.
\href{https://doi.org/10.1103/PhysRevX.5.041039}{doi: 10.1103/PhysRevX.5.041039}.

\bibitem{HatanoSasa2001}
T. Hatano and S.-i. Sasa.
Steady-State Thermodynamics of Langevin Systems.
\emph{Phys. Rev. Lett.}, \textbf{86}(16):3463--3466, 2001.
\href{https://doi.org/10.1103/PhysRevLett.86.3463}{doi: 10.1103/PhysRevLett.86.3463}.

\bibitem{EspositoVandenBroeck2010}
M. Esposito and C. Van den Broeck.
Three Faces of the Second Law. I. Master Equation Formulation.
\emph{Phys. Rev. E}, \textbf{82}(1):011143, 2010.
\href{https://doi.org/10.1103/PhysRevE.82.011143}{doi: 10.1103/PhysRevE.82.011143}.

\end{thebibliography}
\end{document}